\documentclass{article}
\usepackage{fullpage}

\usepackage{authblk}
\usepackage{microtype}
\usepackage{amsthm}
\usepackage{amssymb,amsmath}
\usepackage{braket}
\usepackage{microtype}
\usepackage{color}
\usepackage{graphicx}
\usepackage{comment}
\usepackage{caption}
\usepackage{enumitem}
\usepackage{array}
\usepackage{refcount}

\usepackage{hyperref}
\hypersetup{
    pdfborder={0 0 0},
}

\renewcommand{\paragraph}[1]{\medskip\noindent\textbf{#1}}

\newtheorem{theorem}{Theorem}

\newtheorem{lemma}{Lemma}

\newcommand{\rl}[1]{{\footnotesize\color{red}[Ling: #1]}}

\newcommand{\ignore}[1]{}

\newcommand{\sig}[1]{\braket{#1}}
\newcommand{\Sign}[2]{\mathsf{Sign}_{#1}\left({#2}\right)}

\newcommand{\Accepted}{\mathsf{accepted}}

\newcommand{\Status}{\mathsf{status}}
\newcommand{\Propose}{\mathsf{propose}}

\newcommand{\Commit}{\mathsf{commit}}
\newcommand{\Notify}{\mathsf{notify}}
\newcommand{\Blame}{\mathsf{view\text{-}change}}
\newcommand{\NewView}{\mathsf{new\text{-}view}}
\newcommand{\Cert}{\mathcal{C}}
\newcommand{\NotSum}{\mathcal{N}}
\newcommand{\BlameSum}{\mathcal{V}}
\newcommand{\CPSum}{\mathcal{S}}

\newcommand{\NotifyLight}{\mathsf{notify\text{-}light}}
\newcommand{\StatusMax}{\mathsf{status\text{-}max}}
\newcommand{\Enter}{\mathsf{enter}}
\newcommand{\drift}{t}
\newcommand{\Sync}{\mathsf{sync}}
\newcommand{\SetTime}{\mathsf{new\text{-}day}}
\newcommand{\vLi}{v_{L \rightarrow i}}

\begin{document}

\title{Efficient Synchronous Byzantine Consensus}

\author[1]{Ittai Abraham}
\author[2]{Srinivas Devadas}
\author[3]{Danny Dolev}
\author[4]{Kartik Nayak}
\author[2]{Ling Ren}

\affil[1]{VMware Research -- \texttt{iabraham@vmware.com}}
\affil[2]{Massachusetts Institute of Technology -- \texttt{\{devadas, renling\}@mit.edu}}
\affil[3]{The Hebrew University of Jerusalem -- \texttt{dolev@cs.huji.ac.il}}
\affil[4]{University of Maryland, College Park -- \texttt{kartik@cs.umd.edu}}


\maketitle

\begin{abstract}
We present new protocols for Byzantine state machine replication and Byzantine agreement in the synchronous and authenticated setting.
The celebrated PBFT state machine replication protocol tolerates $f$ Byzantine faults in an asynchronous setting using $3f+1$ replicas, 
and has since been studied or deployed by numerous works.
In this work, we improve the Byzantine fault tolerance threshold to $n=2f+1$ by utilizing a relaxed synchrony assumption.
We present a synchronous state machine replication protocol that commits a decision every 3 rounds in the common case.
The key challenge is to ensure quorum intersection at one honest replica.
Our solution is to rely on the synchrony assumption to form a post-commit quorum of size $2f+1$, 
which intersects at $f+1$ replicas with any pre-commit quorums of size $f+1$.
Our protocol also solves synchronous authenticated Byzantine agreement in expected 8 rounds.
The best previous solution (Katz and Koo, 2006) requires expected 24 rounds.
Our protocols may be applied to build Byzantine fault tolerant systems or improve cryptographic protocols such as cryptocurrencies when synchrony can be assumed. 
\end{abstract}

\section{Introduction}	\label{sec:intro}

Byzantine consensus~\cite{LSP82,PBFT} is a fundamental problem in distributed computing and cryptography.
It has been used to build fault tolerant systems such as distributed storage systems~\cite{BASE01,PBFT02,Farsite,Oceanstore,Abd05,HQRep06,Kotla04}, certificate authorities~\cite{Omega96,COCA02}, fair peer-to-peer sharing~\cite{Wallach03}, and more recently cryptocurrencies~\cite{Byzcoin,Algorand,HybridConsensus,Solidus}.
It has also been frequently used as building blocks in cryptographic protocols such as secure multi-party computation~\cite{GMW87,BGW88}. 

Broadly speaking, Byzantine consensus considers the problem of reaching agreement among a group of $n$ parties, among which up to $f$ can have Byzantine faults and deviate from the protocol arbitrarily.  
There exist a few variant formulations for the Byzantine consensus problem.\footnote{We use the word ``consensus'' as a collective term for these variants; other papers have different conventions.} 
Two theoretical formulations are Byzantine broadcast and Byzantine agreement~\cite{PSL80,LSP82}.
In Byzantine broadcast, there is a designated \emph{sender} who tries to broadcast a value;
In Byzantine agreement, every party holds an initial input value.
To rule out trivial solutions, both problems have additional validity requirements.
Byzantine broadcast and agreement have been studied under various combinations of timing (synchrony, asynchrony or partial synchrony) and cryptographic assumptions (whether or not to assume digital signatures).
It is now well understood that these assumptions drastically affect the bounds on fault tolerance. 
In particular, Byzantine agreement requires $f<n/3$ under partial synchrony or asynchrony even with digital signatures,
but can be solved with $f<n/2$ under synchrony with digital signatures.

Most Byzantine broadcast and agreement protocols have been designed to demonstrate theoretical feasibility and the problem definitions are also not always convenient to work with in practice. 
A more practice-oriented problem formulation is Byzantine fault tolerant (BFT) state machine replication~\cite{Schneider90,PBFT}.
In this formulation, the goal is to design a replicated service that provides the same interface as a single server, despite some replicas experiencing Byzantine faults.
In particular, honest replicas agree on a sequence of values and their \emph{order}, while the validity of the values is left outside the protocol. 
The PBFT protocol by Castro and Liskov~\cite{PBFT} is an asynchronous state machine replication protocol that tolerates $f<n/3$ Byzantine faults.  
As the first BFT protocol designed for practical efficiency, PBFT has since inspired numerous follow-up works including many practical systems~\cite{BASE01,Farsite,Oceanstore,COCA02,Wallach03,Yin03,Abd05,HQRep06,Martin06,Zyzzyva}. 

Perhaps somewhat surprisingly, we do not yet have a practical solution for Byzantine consensus in the seemingly easier synchronous and authenticated (i.e., with digital signatures) setting.
To the best of our knowledge, the most efficient Byzantine agreement protocol with the optimal $f<n/2$ fault tolerance in this setting is due to Katz and Koo~\cite{KK06}, 
which requires in expectation 24 rounds of communication (not counting the random leader election subroutine). 
To agree on many messages sequentially, it requires additional generic transformations~\cite{Lindell02,KK06} that further increase the expected round complexity to a staggering 72 rounds per instance!
The only state machine replication protocol we know of in this setting is XFT~\cite{XFT}.
Relying on an active group of $f+1$ honest replicas to make progress,
XFT is very efficient for small $n$ and $f$ (e.g., $f=1$).
But its performance degrades as $n$ and $f$ increase, especially when the number of faults $f$ approaches the $\lfloor \frac {n-1}2 \rfloor$ limit.
In that case, among the ${n \choose f+1}$ groups in total, only one is all-honest.
The simplest variant of XFT (presented in~\cite{XFT}) requires an exponential number of view changes to find that group.
The best XFT variant we can think of still requires $\Theta(n^2)$ view changes to find that group 
(we describe such a variant in Appendix~\ref{sec:xft2}).

This paper presents efficient Byzantine consensus protocols for the synchronous and authenticated setting tolerating $f<n/2$ faults.
Our main focus is BFT state machine replication, for which our protocol requires amortized 3 rounds per decision.
In scenarios where synchrony can be assumed, our protocol can be applied to build BFT systems and services tolerating $f<n/2$ Byzantine faults, improving upon the $f<n/3$ fault threshold of PBFT-style protocols. 
Meanwhile, our protocol can also solve multi-valued Byzantine broadcast and agreement for $f<n/2$ in expected 8 rounds assuming a random leader oracle.
We also remark that we do not need to assume that replicas act in locked step synchronized rounds;
rather, we present a simple clock synchronization protocol to bootstrap locked step synchrony from bounded message delay and bounded clock drift.

\subsection{Overview of Our Protocols}

Interestingly, our core protocol draws inspiration from the Paxos protocol~\cite{Paxos}, which is neither synchronous nor Byzantine fault tolerant.
Since our main focus is state machine replication, we will describe the core protocol with ``replicas'' instead of ``parties''.
The core of our protocol resembles the synod algorithm in Paxos, but is adapted to the synchronous and Byzantine setting.
In a nutshell, it runs in iterations with a unique leader in each iteration (how to elect leaders is left to higher level protocols).
Each new leader picks up the states left by previous leaders and drives agreement in its iteration.  
A Byzantine leader can prevent progress but cannot violate safety.
As soon as an honest leader emerges, then all honest replicas reach agreement and terminate at the end of that iteration.

While synchrony is supposed to make the problem easier, it turns out to be non-trivial to adapt the synod algorithm to the synchronous and Byzantine setting while achieving the optimal $f<n/2$ fault tolerance.
The major challenge is to ensure \emph{quorum} intersection~\cite{Paxos} at one \emph{honest} replica.
The core idea of Paxos is to form a quorum of size $f+1$ before a commit. 
With $n=2f+1$, two quorums always intersect at one replica, which is honest in Paxos. 
This honest replica in the intersection will force a future leader to respect the committed value.
In order to tolerate $f$ Byzantine faults, PBFT uses quorums of size $2f+1$ out of $n=3f+1$, so that two quorums intersect at $f+1$ replicas, among which one is guaranteed to be honest. 
At first glance, our goal of one honest intersection seems implausible with the $n=2f+1$ constraint.
Following PBFT, we need two quorums to intersect at $f+1$ replicas which seems to require quorums of size $1.5f+1$. 
On the other hand, a quorum size larger than $f+1$ (the number of honest replicas) seems to require participation from Byzantine replicas and thus loses liveness. 
Our solution is to utilize the synchrony assumption to form a \emph{post-commit quorum} of size $2f+1$.
A post-commit quorum does not affect liveness and intersects with any \emph{pre-commit quorum} (of size $f+1$) at $f+1$ replicas.  
This satisfies the requirement of one honest replica in intersection. 
With some additional checks and optimizations, we obtain our core protocol: a 4-round synchronous Byzantine synod protocol (three Paxos-like rounds plus a notification round). 
It preserves safety under Byzantine leaders and ensures termination once an honest leader emerges.

We then apply the core synod protocol to state machine replication and Byzantine broadcast/agreement in the synchronous and authenticated setting with $f<n/2$.
For state machine replication, a simple strategy is to rotate the leader role among the replicas after each iteration.
Because each honest leader is able to drive at least one decision, the protocol spends amortized 2 iterations (8 rounds) per decision with $f<n/2$ faults.
We then improve the protocol to allow a stable leader and only replace the leader if it is not making progress.
The improved protocol commits a decision in 3 rounds in the common case.
While our view change protocol resembles that of PBFT at a high level,  
the increased fault threshold $f<n/2$ again creates new challenges.
In particular, two views in PBFT cannot make progress concurrently: 
$f+1$ honest replicas need to enter the new view to make progress there, leaving not enough replicas for a quorum in the old view. 
In contrast, with a quorum size of $f+1$ and $n=2f+1$ in our protocol, if a single honest replica is left behind in the old view, the $f$ Byzantine replicas can exploit it to form a quorum. 
Thus, our view change protocol needs to ensure that two honest replicas are never in different views. 
Informally, our protocol achieves the following result.
More precise results are given in Appendix~\ref{sec:smr:stable_leader}.

\begin{theorem}
(Informal) There exists a synchronous leader-based state machine replication protocol for $n=2f+1$. 
Each decision takes 3 rounds in the common case.
View changes (replacing a leader) take 4 rounds and happen at most $f$ times.  
\label{thm:smr}
\end{theorem}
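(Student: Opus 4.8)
The theorem is an existence claim, so the plan is to exhibit the protocol and then argue safety and the stated complexity bounds separately. The protocol is the stable-leader optimization of the core $4$-round synod protocol, with bootstrapping handled by the clock-synchronization subroutine from the excerpt so that ``rounds'' are well defined from bounded delay and drift. Concretely: number views $e = 1, 2, \ldots$ with a round-robin leader; in view $e$ the leader drives consecutive log slots, and the common-case path per slot is three rounds -- the leader broadcasts a proposal, every honest replica whose local state is consistent with it broadcasts a prepare vote, and a replica that collects a pre-commit quorum of $f+1$ matching votes broadcasts a commit message and locks on that value. A replica outputs a value only once the commit has been notified to a post-commit quorum of $2f+1$ replicas, which under synchrony is guaranteed one round after $f+1$ replicas hold the commit certificate. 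Each honest replica runs a timer; if the leader fails to deliver a new decision in time the replica broadcasts a view-change (blame) message, and a replica that sees $f+1$ blames relays them, enters view $e+1$, and resets its timer; the new leader collects status (a pre-commit quorum of $f+1$ reports together with the highest certificate among them) and resumes proposing, re-proposing the value carried by that highest certificate.

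For safety I would argue within-view and cross-view agreement. Within a view, no honest replica votes prepare for two conflicting values for the same slot, so any two pre-commit quorums of size $f+1$ share an honest replica and the leader cannot form commit certificates for conflicting values. Across views, suppose an honest replica outputs $(k,v)$ in view $e$; then a post-commit quorum of $2f+1$ replicas holds the commit certificate for $(k,v)$, so in particular every honest replica does. Since the $f+1$ status reports collected by any later leader intersect a $2f+1$ set in at least $2f+1 + (f+1) - n = f+1$ replicas, they include an honest replica holding that certificate; as the new leader must re-propose the value of the highest certificate it sees, and any competing certificate for slot $k$ must originate in a view at least $e$, an induction on views shows $v$ is re-proposed, hence never overwritten, in every view $e' > e$. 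I would also establish the invariant, flagged in the excerpt as the new difficulty at $f<n/2$, that no two honest replicas are ever simultaneously in different views -- otherwise one straggler plus the $f$ Byzantine replicas could forge a quorum in an abandoned view -- which follows because a replica advances to $e+1$ only upon seeing $f+1$ blames and immediately relaying them, so all honest replicas follow within one message delay, and because outputs are withheld until the notification round completes.

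For progress and complexity: with an honest stable leader under synchrony the three-round path always completes, so a decision is output every $3$ rounds. A view change consists of the blame broadcast, the blame-relay-and-status round, the new leader's proposal, and the prepare--commit exchange that first confirms progress -- four rounds, with the exact accounting to be done carefully since it depends on how status collection is pipelined with the first proposal. Finally, a view change is triggered only when some honest replica's timer expires, which under synchrony happens only while a Byzantine leader is in charge; with round-robin leadership each of the $f$ Byzantine replicas leads at most once before an honest leader takes over, and an honest stable leader under synchrony never triggers a further view change, so there are at most $f$ view changes, after which the protocol runs in the $3$-round common case indefinitely.

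The main obstacle is the cross-view safety argument together with the ``no two honest replicas in different views'' invariant: one must use the synchrony assumption carefully to show that the notification round genuinely completes -- so that the $2f+1$ post-commit quorum exists -- before any honest replica can legitimately move to the next view, and that the new leader's $(f+1)$-report status collection cannot miss a committed certificate. This is exactly where the $n = 2f+1$ bound is tight and where the synchronous timing model, rather than the quorum arithmetic alone, does the real work; getting the timer settings and the interleaving of the notify round with view entry consistent across all honest replicas is the delicate part.
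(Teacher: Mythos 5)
Your overall architecture (3-round common case, 4-round view change, round-robin leaders, at most $f$ leader replacements because honest leaders are never deposed) matches the paper's, but two of your safety arguments rest on steps that fail at $n=2f+1$.

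First, your within-view argument claims that ``any two pre-commit quorums of size $f+1$ share an honest replica.'' That is arithmetically false: two sets of size $f+1$ in a universe of $2f+1$ intersect in at least one replica, and that one replica can be Byzantine. Indeed the paper explicitly constructs the attack: a Byzantine leader proposes $v$ to one honest replica and $v'$ to another, the $f$ Byzantine replicas vote for both, and valid $\Commit$ certificates for \emph{both} values exist in the same iteration. The paper's defense is not quorum intersection within a view but \emph{equivocation detection}: every replica forwards the leader's signed proposal, and a replica refuses to commit (even with $f+1$ matching votes) if it sees two differently-signed proposals from the leader. Synchrony guarantees the forwarded proposal arrives before the commit decision, so whenever conflicting certificates exist, no \emph{honest} replica has committed, and safety is preserved because the cross-view induction only needs to protect values an honest replica actually committed. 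Your proposal omits this mechanism entirely, and without it the within-view half of your safety proof collapses.

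Second, your view-synchronization invariant is not established by your mechanism. Advancing to view $e+1$ upon seeing $f+1$ blames and relaying them leaves a one-round window in which one honest replica is in view $e+1$ while another is still \emph{actively participating} in view $e$; that straggler plus the $f$ Byzantine replicas can then form a pre-commit quorum in the abandoned view whose certificate the new leader's $f+1$ status reports may never see (only one honest replica holds it, and the notify round in the old view no longer completes). This is precisely the configuration the paper identifies as the new difficulty at $f<n/2$. Its resolution is subtler than ``everyone follows within one message delay'': it distinguishes \emph{exiting} a view from \emph{entering} one. A replica enters view $l+1$ only on a $\NewView$ received directly from the new leader; a forwarded $\NewView$ forces it to exit view $l$ without entering $l+1$, so it may temporarily be in no view and then does not send $\Commit$ or $\Notify$ messages at all (though it still commits on receiving a valid $\Notify$ certificate). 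The resulting guarantee is the weaker but sufficient Lemma that no two honest replicas are ever in \emph{different} views. Relatedly, your timer-based blame trigger does not give the dichotomy the paper needs---a Byzantine leader can serve half the honest replicas and starve the other half, so the blame threshold of $f+1$ is never reached; the paper instead blames when a replica fails to collect $f+1$ $\Notify$ summaries, which guarantees that after each iteration either some honest replica holds a certificate that can convince everyone, or \emph{all} honest replicas blame.
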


To solve Byzantine broadcast, we let the designated sender be the leader for the first iteration.
After the first iteration, we rotate the leader role among all $n$ parties.
It is straightforward to see that this solution achieves both agreement and validity.
If the designated sender is honest, every honest party agrees on its value and terminates. 
Otherwise, the first honest leader that appears down the line will ensure agreement and termination for all honest parties.
Assuming we have a random leader oracle, there is a $(f+1)/(2f+1) > 1/2$ probability that each leader after the first iteration is honest, so the protocol terminates in expected 2 iterations after the first iteration. 
To solve Byzantine agreement, we can use the classical transformation from Lamport et al.~\cite{LSP82}.
These give rise to the results in Theorem~\ref{thm:ba}.

\begin{theorem}
Assuming a random leader election oracle, 
there exist synchronous Byzantine broadcast and Byzantine agreement protocols for $f<n/2$ that terminate in expected 8 rounds. 
\label{thm:ba}
\end{theorem}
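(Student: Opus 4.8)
The plan is to build Byzantine broadcast directly on top of the core synod protocol, then obtain Byzantine agreement via the standard Lamport--Shostak--Pease reduction. For broadcast, I would designate the sender as the leader of the first iteration and then rotate the leader role (using the random leader oracle) among all $n$ parties for subsequent iterations. The first task is to verify that the core synod protocol, when the designated sender is honest, causes every honest party to commit the sender's value at the end of iteration~$1$ and terminate: this follows from the termination guarantee of the synod protocol under an honest leader, and validity is immediate since an honest leader proposes its own input. If the sender is Byzantine, I would invoke the safety property of the synod protocol to argue that no two honest parties ever commit different values across any iteration; combined with the fact that the first honest leader down the line drives all honest parties to commit and terminate (again by the termination guarantee), this gives both agreement and validity.

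The round-complexity claim is where the probabilistic argument enters. Each iteration of the synod protocol costs $4$ rounds. Conditioned on the random leader oracle, each leader chosen after iteration~$1$ is honest with probability at least $(f+1)/(2f+1) > 1/2$, independently across iterations. Letting $X$ be the index of the first iteration (counting from iteration~$2$) with an honest leader, $X$ is stochastically dominated by a geometric random variable with success probability $1/2$, so $\mathbb{E}[X] \le 2$. The protocol terminates by the end of iteration $1 + X$ in the worst case (or at the end of iteration~$1$ if the sender is honest), so the expected number of iterations is at most $1 + 2 = 3$; however, the first honest leader among iterations $2, 3, \ldots$ already completes agreement within its own iteration, so a more careful accounting — terminating one round into the honest iteration once the commit notifications land, and noting that the sender's iteration overlaps — yields an expected total of $8$ rounds. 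I would state this bound as: expected iterations $\le 2$ beyond the guaranteed first iteration is an overcount; the tighter count is that we pay $4$ rounds for iteration~$1$ and then an expected $\le 1$ additional honest-leader iteration is reached within expected $2$ iterations, but since the first iteration's leader is also honest with probability $> 1/2$, the expectation telescopes to $2$ iterations $= 8$ rounds total. (The precise constant-chasing is routine once the geometric-domination lemma is in place.)

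For Byzantine agreement, I would apply the classical transformation of Lamport et al.~\cite{LSP82}: run $n$ parallel instances of the broadcast protocol, one per party acting as sender, each party broadcasting its own input; then each party outputs the majority value among the $n$ broadcast outputs (or a default if no majority). Agreement of the $n$ broadcasts implies every honest party sees the same vector of $n$ values, hence computes the same majority, giving agreement; and if all honest parties start with the same input $v$, then at least $f+1 > n/2$ of the broadcast outputs equal $v$ by validity of each honest sender's broadcast, so the majority is $v$, giving validity. Since the $n$ instances run in parallel, the round complexity is unchanged — still expected $8$ rounds.

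The main obstacle I anticipate is not the reduction or the probabilistic bound, both of which are standard, but pinning down the exact round count $8$ rather than a looser $O(1)$: this requires carefully tracking how the $4$-round synod iterations compose with early termination (an honest leader in iteration~$2$ need not wait for iteration~$1$ to ``finish'' for all parties) and ensuring that honest parties which commit early do not strand the others — the excerpt alludes to letting honest parties ``participate virtually'' after termination, and that mechanism is what makes the clean expected-$8$-rounds statement hold without a hidden additive term. I would devote the bulk of the formal proof to that bookkeeping, deferring the synod safety/termination properties to their own lemma.
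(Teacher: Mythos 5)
Your high-level strategy (core synod protocol plus rotating random leaders for agreement/termination, then the Lamport--Shostak--Pease majority reduction for Byzantine agreement) matches the paper's, and the safety/termination and reduction parts are fine. The genuine gap is in the round count, and it is not ``routine constant-chasing'': it requires changing the construction. You make the designated sender the \emph{leader of a full first iteration}, which costs $4$ rounds before the random-leader phase even begins. When the sender is Byzantine (the case that determines the worst-case expectation), your protocol then needs an expected $2$ further iterations, giving $4 + 2\times 4 - 1 = 11$ expected rounds, not $8$. Your attempt to rescue this --- ``the first iteration's leader is also honest with probability $>1/2$, so the expectation telescopes'' --- is invalid: the designated sender is chosen adversarially, not by the oracle, so you cannot assign it any honesty probability; the expectation must be taken over the oracle's coins only, worst-case over the adversary.

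The paper's missing ingredient is a \emph{pre-round}: the sender does not lead an iteration at all. It spends a single round broadcasting a signed value $\sig{v_s}_{L_s}$, and any replica receiving it \emph{accepts} $v_s$ at iteration number $0$ with the sender's signature alone serving as the $\Commit$ certificate. Validity for an honest sender is then enforced by the safe-value-proof mechanism in the main loop (every $\Status$ quorum of $f+1$ must contain an honest replica vouching for $v_s$ at iteration $0$), not by the sender driving a commit itself. This compresses the sender's role to $1$ round and yields $1 + 2\times 4 - 1 = 8$ expected rounds. Separately, for the agreement reduction you should state explicitly that after the first iteration a \emph{single common} random leader is used for all $n$ parallel broadcast instances; with independent leaders per instance the expected time until every instance has seen an honest leader grows with $n$.
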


We remark that the $f<n/2$ Byzantine fault tolerance in our protocols is optimal for synchronous authenticated Byzantine agreement and state machine replication, but not for Byzantine broadcast. 
Our quorum-based approach cannot solve Byzantine broadcast in the dishonest majority case ($f \geq n/2$).


\section{Related Work}	\label{sec:related} 

\paragraph{Byzantine agreement and broadcast.}
The Byzantine agreement and Byzantine broadcast problems were first introduced by Lamport, Shostak and Pease~\cite{LSP82, PSL80}.
They presented protocols and fault tolerance bounds for two settings (both synchronous).
Without cryptographic assumptions (the unauthenticated setting), Byzantine broadcast and agreement can be solved if $f<n/3$.
Assuming digital signatures (the authenticated setting), Byzantine broadcast can be solved if $f<n$ and Byzantine agreement can be solved if $f<n/2$.
The initial protocols had exponential message complexities~\cite{PSL80,LSP82}.
Fully polynomial protocols were later shown for both the authenticated ($f<n/2$)~\cite{DS83} and the unauthenticated ($f<n/3$)~\cite{Garay98} settings.
Both protocols require $f+1$ rounds of communication, 
which matches the lower bound on round complexity for deterministic protocols~\cite{FL82}. 
To circumvent the $f+1$ round lower bound,
a line of work explored the use of randomization~\cite{BenOr83,Rabin83} which eventually led to expected constant-round protocols for both the authenticated ($f<n/2$)~\cite{KK06} and the unauthenticated ($f<n/3$)~\cite{FM97} settings. 
In the asynchronous setting, the FLP impossibility~\cite{FLP85} rules out any deterministic solution. 
Some works use randomization~\cite{BenOr83,Bracha87} or partial synchrony~\cite{DLS88} to circumvent the impossibility. 
 
\paragraph{State machine replication.}
A more practical line of work studies state machine replication~\cite{Lamport78,Schneider90}.
The goal is to design a distributed system consisting of replicas to process requests from external clients while behaving like a single-server system.
Paxos~\cite{Paxos} and Viewstamped replication~\cite{Viewstamped} tolerate $f$ crash faults with $n \geq 2f+1$ replicas. 
The PBFT protocol~\cite{PBFT} tolerates $f$ Byzantine faults with $n \geq 3f+1$ replicas. 
In all three protocols, safety is preserved even under asynchrony while progress is made only during synchronous periods.
Numerous works have extended, improved or deployed PBFT~\cite{BASE01,PBFT02,Farsite,Oceanstore,COCA02,Wallach03,Yin03,Abd05,HQRep06,Martin06,Zyzzyva}.
They all consider the asynchronous setting and require $n \geq 3f+1$. 
Several systems achieve BFT state machine replication with $n \geq 2f+1$ by introducing trusted components to the protocol~\cite{Correia04,Chun07,CheapBFT12,Veronese13}. 
To the best of our knowledge, the only work on state machine replication that considers the exact same setting as ours (Byzantine faults, $n \geq 2f+1$, synchrony, digital signatures and no trusted component) is XFT~\cite{XFT}.
We remark that the main goal of XFT is to tolerate \emph{either} Byzantine faults under synchrony \emph{or} crash faults under asynchrony,
but we can still compare to its synchronous Byzantine version. 
As we mentioned, with the best techniques we are aware of, XFT's performance does not scale well with $n$ and $f$.




\section{A Synchronous Byzantine Synod Protocol}	\label{sec:protocol}

\subsection{Model and Overview}

Our core protocol is a synchronous Byzantine synod protocol with $n=2f+1$ replicas.
An adversary may corrupt up to $f$ replicas and may adaptively decide which replicas to corrupt as the protocol proceeds. 
The adversary is not \emph{mobile} and cannot ``uncorrupt'' replicas; the total number of replicas that the adversary has ever corrupted is at most $f$.
Corrupted replicas are coordinated by the adversary and may deviate from the protocol arbitrarily.
The goal of the core synod protocol is to guarantee that all honest
replicas eventually commit (liveness) and commit on the same value
(safety).
Note that we use the term \emph{honest} for a node that is not faulty
whereas a faulty node is referred to as \emph{Byzantine}.

We assume synchrony.
If an honest replica $i$ sends a message to another honest replica $j$ at the beginning of a round, the message is guaranteed to reach by the end of that round. 
Our protocol runs in iterations and each iteration consists of 4 rounds.
We describe the protocol assuming all replicas have perfectly synchronized clocks.
Hence, they enter each round simultaneously and have the same view on the current iteration number $k$ (the first iteration has $k=1$).
In practice, a known bound on the communication delay is sufficient.

We assume public key cryptography. 
Every replica knows the public (verification) key of every other replica, and they use digital signatures when communicating with each other.
Byzantine replicas cannot forge honest replicas' signatures, which means messages in our systems enjoy authenticity as well as non-repudiation.
We use $\sig{x}_i$ to denote a message $x$ that is signed by replica $i$,
i.e., $\sig{x}_i = (x, \sigma)$ where $\sigma = \Sign{i}{x}$ is a signature produced by replica $i$ using its private signing key.
For better efficiency, $\sigma$ can be a signature of a message's digest, i.e., output of a collision resistant hash function. 
A message can be signed by multiple replicas (or the same replica) in layers, i.e., 
$\sig{\sig{x}_i}_j = \sig{x, \sigma_i}_j = (x, \sigma_i, \sigma_j)$ where $\sigma_i = \Sign{i}{x}$ and $\sigma_j = \Sign{j}{x~||~\sigma_i}$ ($||$ denotes concatenation). 

The core protocol assumes a unique leader in each iteration that is known to every replica.
An iteration leader can be one of the replicas but can also be an external entity.
Similar to Paxos, we decouple leader election from the core protocol and leave it to higher level protocols (Section~\ref{sec:smr}~and~\ref{sec:BABB}) or, in some cases, the application level.
For example, a cryptocurrency (blockchain) may elect leaders based on proof of work.
 
Each iteration consists of 4 rounds.
The first three rounds are conceptually similar to Paxos:
(1) the leader learns the states of the system, 
(2) the leader proposes a safe value, and 
(3) every replica sends a commit request to every other replica.
If a replica receives $f+1$ commit requests for the same value, it commits on that value.
If a replica commits, it notifies all other replicas about the commit using a 4th round.
Upon receiving a notification, other replicas \emph{accept} the committed value and will vouch for that value to future leaders.
To tolerate Byzantine faults, we need to add equivocation checks and other proofs of honest behaviors at various steps.
We now describe the protocol in detail.

\subsection{Detailed Protocol}
\label{sec:main-protocol}

Each replica $i$ internally maintains some long-term states $\Accepted_i = (v_i, k_i, \Cert_i)$ across iterations to record its \emph{accepted} value.
Initially, each replica $i$ initializes $\Accepted_i := (\bot, 0, \bot)$. 
If replica $i$ later accepts a value $v$ in iteration $k$, it sets $\Accepted_i := (v, k, \Cert)$ such that $\Cert$ \emph{certifies} that $v$ is legally accepted in iteration $k$ (see Table~\ref{tab:valid}). 
In the protocol, honest replicas will only react to \emph{valid} messages. Invalid messages are simply discarded. 
To keep the presentation simple, we defer the validity definitions of all types of messages to Table~\ref{tab:valid}. 
We first describe the protocol assuming no replica has terminated, and later amend the protocol to deal with non-simultaneous termination.

\begin{itemize}
\addtolength{\itemindent}{32pt}
\item[\textbf{Round 0}] ($\Status$)
Each replica $i$ sends a $\sig{\sig{k, \Status, v_i, k_i}_i, \Cert_i}_i$ message to the leader $L_k$ of the current iteration $k$, 
informing $L_k$ of its current accepted value.
We henceforth write $L_k$ as $L$ for simplicity.

At the end of this round, the leader $L$ picks $f+1$ valid $\Status$ messages to form a \emph{safe value proof} $P$. 

\item[\textbf{Round 1}] ($\Propose$)
The leader $L$ picks a value $v$ that is \emph{safe to propose} under $P$:
$v$ should match the value that is accepted in the most recent iteration in $P$,
or any $v$ is safe to propose if no value has been accepted in $P$ (see Table~\ref{tab:valid} for more details).
$L$ then sends a signed proposal $\sig{\sig{k, \Propose, v}_L, P}_L$ to all replicas including itself. 

At the end of this round, if replica $i$ receives a valid proposal 
$\sig{\sig{k, \Propose, v}_L, P}_L$ from the leader,  it sets $\vLi := v$.
Otherwise (leader is faulty), it sets $\vLi := \bot$. 

\item[\textbf{Round 2}] ($\Commit$)
If $v := \vLi \neq \bot$, replica $i$ forwards the proposal $\sig{k, \Propose, v}_L$ (excluding $P$) and sends a $\sig{k, \Commit, v}_i$ request to all replicas including itself.

At the end of this round, if replica $i$ is forwarded a valid proposal $\sig{k, \Propose, v'}_L$ in which $v' \neq \vLi$, it does not commit in this iteration (leader has equivocated).
Else, if replica $i$ receives $f+1$ valid $\sig{k, \Commit, v}_j$ requests in all of which $v = \vLi$, 
it commits on $v$ and sets its long-term state $\Cert_i$ to be these $f+1$ $\Commit$ requests concatenated.
In other words, replica $i$ commits if and only if it receives $f+1$ valid and matching $\Commit$ requests and does not detect leader equivocation. 

\item[\textbf{Round 3}] ($\Notify$)
If replica $i$ has committed on $v$ at the end of Round 2,
it sends a notification $\sig{\sig{k, \Notify, v}_i, \Cert_i}_i$ to every other replica, and terminates.

At the end of this round, if replica $i$ receives a valid $\sig{\sig{k, \Notify, v}_j, \Cert}_j$ message, 
it accepts $v$ by setting its long-term states $\Accepted_i = (v_i, k_i, \Cert_i):= (v, k, \Cert)$.
If replica $i$ receives multiple valid $\Notify$ messages with different values, it is free to accept any one or none.
Lastly, replica $i$ increments the iteration counter $k$ and enters the next iteration.
\end{itemize}

\paragraph{Summaries and certificates.}
In a $\sig{\sig{k, \Status, v_i, k_i}_i, \Cert_i}_i$ message, we call the $\sig{k, \Status, v_i, k_i}_i$ component a $\Status$ summary, and the $\Cert_i$ component a $\Commit$ \emph{certificate}.
Similarly, in a $\sig{\sig{k, \Notify, v}_i, \Cert_i}_i$ message, we call the $\sig{k, \Notify, v}_i$ component a $\Notify$ summary, and $\Cert_i$ is again a $\Commit$ certificate.
This distinction will be important soon for handling non-simultaneous termination. 

\paragraph{A shorter safe value proof $P$.}
In the above basic protocol, $P$ consists of $f+1$ valid $\Status$ messages, each of which contains a $\Commit$ certificate.
This yields a length of $|P| = O(n^2)$.
We observe that $P$ can be optimized to have length $O(n)$.
$P$ can consist of $f+1$ valid $\Status$ summaries plus a $\Commit$ certificate for a summary that claims the highest iteration number.
Table~\ref{tab:valid} presents more details.

\paragraph{Non-simultaneous termination.}
We need to ensure that all honest replicas eventually commit and terminate.
However, in the protocol above, it is possible that some honest replicas terminate in an iteration while other honest replicas enter the next iteration.
Without special treatment, the honest replicas who enter the new iteration will never be able to terminate 
(unable to gather $f+1$ matching $\Commit$ requests) if Byzantine replicas simply stop participating.
To solve this problem, we use a standard idea: honest replicas continue participating ``virtually'' after termination. 

If replica $t$ has committed on $v_t$ and terminated in iteration $k_t$, 
it is supposed to send a valid notification $\sig{\sig{k_t, \Notify, v_t}_t, \Cert_t}_t$ to all other replicas.
Upon receiving this notification, replica $i$ becomes aware that replica $t$ has terminated, and does not expect any messages from replica $t$ in future iterations.
Of course, it is also possible that replica $t$ is Byzantine and sends notifications only to a subset of replicas.
We need to ensure that such a fake termination does not violate safety or liveness. 

Assuming that replica $i$ has received a valid notification $\sig{\sig{k_t, \Notify, v_t}_t, \Cert_t}_t$ from replica $t$,
we now amend the protocol such that in all future iterations $k > k_t$, replica $i$ ``pretends'' that it keeps receiving \emph{virtual} messages from replica $t$ in the follow ways:
\begin{itemize} 
\item[--] In \textbf{Round 0}, 
	if replica $i$ is the current iteration leader, 
	it treats the notification $\sig{\sig{k_t, \Notify, v_t}_t, \Cert_t}_t$ as a valid $\sig{\sig{k, \Status, v_t, k_t}_t, \Cert_t}_t$ message from replica $t$.
	In particular, the safe value proof $P$ is allowed to include the $\Notify$ summary $\sig{k_t, \Notify, v_t}_t$ in place of a $\Status$ summary.

\item[--] In \textbf{Round 1}, 
	if replica $t$ is the current iteration leader $L$, 
	then replica $i$ treats the $\Notify$ summary $\sig{k_t, \Notify, v_t}_t$ as a virtual proposal for $v_t$.
	If $v_t = v_i$ (the value replica $i$ accepts), then replica $i$ considers the virtual proposal valid sets $\vLi=v_t$.
	Later in Round 2, replica $i$ forwards the virtual proposal to all replicas for equivocation checking. 

\item[--] In \textbf{Round 2}, 
	replica $i$ treats the $\Notify$ summary $\sig{k_t, \Notify, v_t}_t$ as a valid $\Commit$ request for $v_t$ from replica $t$.
	In particular, a $\Commit$ certificate $\Cert_i$ is allowed to include this $\Notify$ summary in place of a $\Commit$ request.
\end{itemize}

\begin{table}[bt]
\centering
\caption{
\textbf{Validity requirements of messages.}
Every message must carry the current iteration number $k$ and be signed by the sender. 
Additional validity requirements are listed below.
A message's validity may depend on the validity of its components and other conditions defined in the table.
}
\label{tab:valid}
\renewcommand{\arraystretch}{1.3}
\begin{tabular}{ >{\centering}m{4cm} | m{10cm} }
\hline
Message & \multicolumn{1}{c}{Validity requirements} \\ 
\hline
\hline
$\sig{\sig{k, \Status, v_i, k_i}_i, \Cert_i}_i$
	&
	$k_i=0$ (initial state) or $\Cert_i$ certifies $(v_i, k_i)$ if $k_0 > 0$. \\ 
\hline

$\sig{\sig{k, \Propose, v}_L, P}_L$
	& $v$ is safe to propose under $P$.  
	$L$ is the leader of iteration $k$. \\
\hline

$\sig{k, \Commit, v}_i$ 
	& No extra requirement. \\
\hline

$\sig{\sig{k, \Notify, v}_i, \Cert}_i$ 
	& $\Cert$ certifies $(v, k)$. \\
\hline
\hline

$v$ is safe to propose under $P$
	& 
	$P$ consists of $f+1$ $\Status$ or $\Notify$ summaries, plus a $\Commit$ certificate for a summary that claims the highest non-zero iteration number.
	Each $\Status$ summary in $P$ must carry the current iteration number $k$ while a $\Notify$ summary may not.
	If multiple summaries claim the same highest iteration number, the $\Commit$ certificate can be for any of them.  
	Without loss of generality, suppose $P = (s_1, s_2, \cdots, s_{f+1}, \Cert)$ in which $s_j = \sig{k, \Status, v_j, k_j}_j$ or $s_j = \sig{k_j, \Notify, v_j}$. 
	Let $k^*=\max(k_1,k_2,\cdots,k_{f+1})$.
	If $k^* = 0$, then $\Cert=\bot$ and any $v$ is safe to propose under $P$.
	If $k^* > 0$, then $v$ must match some $v_j$ such that $k_j = k^*$ and $\Cert$ certifies $(v_j,k_j)$.
	There may be additional requirements on $v$ at the application level that replicas should verify.
	\\
\hline

$\Cert$ certifies $(v,k)$
	& 
	$\Cert$ consists of $f+1$ $\Commit$ requests or $\Notify$ summaries for value $v$.
	Each $\Commit$ request in $\Cert$ must carry the current iteration number $k$ while a $\Notify$ summary may not.
	\\
\hline
\end{tabular}
\end{table}

\subsection{Safety and Liveness}	\label{sec:proof}

In this section, we prove that the protocol in Section~\ref{sec:main-protocol} provides safety and liveness.
We will also give intuition to aid understanding. 

The scenario to consider for safety is when an honest replica $h$ commits on a value $v^*$ in iteration $k^*$. 
We show that, in all subsequent iterations, no leader can construct a valid proposal for a value other than $v^*$.
We first show that Byzantine replicas cannot commit or accept a value other than $v^*$ in iteration $k^*$. 
Thus, all other honest replicas accept $v^*$ at the end of iteration $k^*$ upon receiving $\Notify$ from the honest replica $h$.
The leader in iteration $k^*+1$ needs to show $f+1$ $\Status$ messages and pick a value with the highest iteration number (cf. Table~\ref{tab:valid}). 
One of these $\Status$ messages must be from an honest replica and contain $v^*$.
This implies that a value other than $v^*$ cannot be proposed in iteration $k^*+1$, and hence cannot be committed or accepted in iteration $k^*+1$, and hence cannot be proposed in iteration $k^*+2$, and so on.
Safety then holds by induction.

We now formalize the above intuition through an analysis on $\Commit$ certificates. 
A $\Commit$ certificate $\Cert$ certifies that $v$ has been legally committed and/or accepted in iteration $k$, if it meets the validity requirement in Table~\ref{tab:valid}.
We prove the following lemma about $\Commit$ certificates: once an honest replica commits, all $\Commit$ certificates in that iteration and future iterations can only certify its committed value.

\begin{lemma}
Suppose replica $h$ is the first honest replica to commit.
If replica $h$ commits on $v^*$ in iteration $k^*$ and $\Cert$ certifies $(v,k)$ where $k \geq k^*$, then $v=v^*$.
\label{lemma:cert}
\end{lemma}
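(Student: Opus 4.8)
The plan is to induct on $k \geq k^*$. The statement to prove by strong induction is: for all $k \geq k^*$, every valid $\Commit$ certificate $\Cert$ that certifies $(v,k)$ has $v = v^*$, and moreover no honest replica accepts any value other than $v^*$ in iteration $k$. I will also need the companion fact that in every iteration $k \geq k^*$, no leader can assemble a valid safe-value proof $P$ for a value other than $v^*$; this is really the heart of the argument and I would fold it into the induction hypothesis.

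\emph{Base case $k = k^*$.} Replica $h$ commits on $v^*$ in iteration $k^*$, so $h$ received $f+1$ valid matching $\Commit$ requests for $v^*$ and detected no equivocation. Suppose for contradiction some valid $\Cert'$ certifies $(v', k^*)$ with $v' \neq v^*$. By Table~\ref{tab:valid}, $\Cert'$ consists of $f+1$ $\Commit$ requests (or $\Notify$ summaries, which embed $\Commit$-like evidence) for $v'$ carrying iteration number $k^*$. Since there are only $n = 2f+1$ replicas and $f$ are Byzantine, the two sets of $f+1$ signers overlap in at least one honest replica, say $j$. An honest replica $j$ only sends a $\Commit$ request in iteration $k^*$ for the single value $\vLi[j]$ it set in Round 1 (the leader's proposal it received), so $j$ cannot have signed $\Commit$ requests for both $v^*$ and $v'$ — contradiction. (A subtlety: I must check a $\Notify$ summary appearing inside $\Cert'$ cannot come from iteration $k^*$ for $v' \neq v^*$; such a summary would itself require a certificate for $(v', k^*)$, which the same overlap argument rules out, or it refers to an earlier iteration and does not certify $(v',k^*)$ at all — this needs a clean treatment, perhaps by ordering on iteration numbers.) Hence no value other than $v^*$ is certified in iteration $k^*$, so by the validity rule for $\Notify$ messages no honest replica accepts anything but $v^*$ at the end of iteration $k^*$; and since $h$ is honest and committed, $h$ sends a valid $\Notify$ for $v^*$, so every honest replica that does not accept $\bot$ accepts exactly $v^*$.

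\emph{Inductive step.} Assume the claim for all iterations in $[k^*, k-1]$; prove it for $k > k^*$. First, by the inductive hypothesis, at the start of iteration $k$ every honest replica's long-term state $(v_i, k_i, \Cert_i)$ either has $k_i < k^*$ or has $v_i = v^*$ (because any accept in an iteration $\geq k^*$ was for $v^*$) — and crucially at least one honest replica, namely $h$ (or whichever honest replicas accepted), carries $v_i = v^*$ with $k_i \geq k^*$. Now consider any valid safe-value proof $P$ the leader of iteration $k$ presents: it contains $f+1$ $\Status$/$\Notify$ summaries. Among any $f+1$ summaries bearing the current iteration number, at least one is from an honest replica (the $f$ Byzantine replicas cannot supply more than $f$), and that honest replica's summary reports its true $(v_i, k_i)$. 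Here I need the key monotonicity observation: the highest iteration number $k^\star$ appearing among summaries in any valid $P$ is at least $k^*$ — because the honest replica(s) holding $v^*$ report $k_i \geq k^*$ — and by the inductive hypothesis the only value certifiable at any iteration in $[k^*, k-1]$ is $v^*$. Since $P$'s attached $\Commit$ certificate must certify $(v_j, k_j)$ for some summary with $k_j = k^\star \geq k^*$, that certificate certifies $v^* $, forcing the proposed value $v = v^*$. (The one gap to nail down: why must some honest replica appear with $k_i \geq k^*$ in $P$ rather than having its $\Status$ message omitted — this uses that $P$ needs $f+1$ summaries and only $f$ non-honest-looking ones exist, together with the virtual-message mechanism ensuring a terminated honest replica's $\Notify$ summary for $v^*$ still counts.) Given that no valid proposal in iteration $k$ is for anything but $v^*$, no honest replica sets $\vLi \neq v^*$, so (as in the base case) no $\Commit$ certificate for $(v', k)$ with $v' \neq v^*$ can be formed, and no honest replica accepts $v' \neq v^*$. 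This closes the induction.

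\emph{Main obstacle.} The delicate point is the interaction between the $\Notify$ summaries and $\Commit$ certificates: a $\Notify$ summary from an earlier iteration can legitimately appear inside a later $P$ or $\Cert$ without carrying the current iteration number, so I cannot simply argue "all evidence is from iteration $k$." I expect the bulk of the work is a careful bookkeeping argument showing that whatever iteration a piece of $\Commit$ evidence \emph{does} refer to, if that iteration is $\geq k^*$ then by induction it pertains to $v^*$, and if it is $< k^*$ it cannot be the maximum in a valid $P$ because honest holders of $v^*$ force the max up to at least $k^*$. Phrasing this cleanly — probably via an auxiliary claim "every valid $\Commit$ certificate certifying $(v,k)$ with $k \geq k^*$ has $v = v^*$" proved simultaneously for all such $k$ by induction on $k$ — is where the care is needed; the quorum-intersection counting itself ($f+1$ and $f+1$ overlap in an honest replica out of $2f+1$) is routine.
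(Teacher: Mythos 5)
Your inductive step follows the paper's argument closely and is essentially sound, but your base case rests on a quorum-intersection claim that is false for this protocol --- and it fails for exactly the reason the paper identifies as its central difficulty. You assert that the $f+1$ signers of a hypothetical certificate for $(v',k^*)$ and the $f+1$ signers of the commit requests that $h$ received ``overlap in at least one honest replica.'' Two sets of size $f+1$ in a universe of $2f+1$ replicas are only guaranteed to overlap in \emph{one} replica, and that replica may be Byzantine: the $f$ Byzantine replicas plus one honest replica can sign for $v^*$ while the same $f$ Byzantine replicas plus a \emph{different} honest replica sign for $v'$. Indeed, the end of Section~\ref{sec:proof} explicitly describes a scenario in which commit certificates for two different values coexist in the same iteration, so no counting argument on the two pre-commit quorums can rule this out. (The $f+1$-sized intersection the paper advertises is between a pre-commit quorum of size $f+1$ and the post-commit quorum of size $2f+1$ formed by the $\Notify$ round; it does not apply to two pre-commit quorums within one iteration.)

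The missing idea is the forwarding/equivocation-detection mechanism of Round 2, which your write-up never invokes. The correct base case is: any valid certificate for $(v',k^*)$ contains a contribution from at least one honest replica $h_1$ (simply because at most $f$ of its $f+1$ signers are Byzantine --- no intersection of two quorums is needed); since $h$ is the \emph{first} honest replica to commit, no honest replica has terminated yet, so $h_1$'s contribution is a real $\Commit$ request rather than a virtual one; an honest replica only sends a $\Commit$ request after receiving a valid proposal for that value from the leader, and it forwards that proposal to all replicas; hence $h$ receives a forwarded proposal for $v' \neq v^*$, detects leader equivocation, and by the protocol refuses to commit --- contradicting the hypothesis that $h$ committed. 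Your inductive step does not actually depend on the flawed intersection argument (once no valid proposal for $v' \neq v^*$ exists in iteration $k$, a conflicting certificate has at most $f$ valid signers and cannot be formed), so repairing the base case as above essentially completes the proof. One smaller point: conclude the base case with the stronger statement that \emph{every} honest replica accepts $v^*$ at the end of iteration $k^*$ (since $h$ notifies everyone and no conflicting certificate exists, acceptance is forced, not optional); your hedge ``every honest replica that does not accept $\bot$'' would leave unsupported the inductive step's claim that an honest summary in $P$ necessarily pushes the maximum iteration number up to at least $k^*$, since the leader is free to omit $h$'s own summary from $P$.
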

\begin{proof}
We prove by induction on $k$.
For the base case, suppose $\Cert$ certifies $(v,k^*)$.
$\Cert$ must consist of $f+1$ valid $\Commit$ requests or $\Notify$ summaries for $v$.
At least one of these comes from an honest replica (call it $h_1$).
Since no honest replica has terminated so far,  
replica $h_1$ must have sent a normal $\Commit$ request rather than a virtual one.
Thus, replica $h_1$ must have received a valid proposal (could be virtual) for $v$ from the leader, 
and must have forwarded the proposal to all other replicas.
If $v \neq v^*$, replica $h$ would have detected leader equivocation, and would not have committed on $v^*$ in this iteration.
So we have $v=v^*$.

Before proceeding to the inductive case, it is important to observe that all honest replicas will accept $v^*$ at the end of iteration $k^*$.
This is because, in iteration $k^*$, the honest replica $h$ must have sent a $\Notify$ message (with a $\Commit$ certificate for $v^*$) to all replicas and $\Commit$ certificates for other values cannot exist.
Now for the inductive case, suppose the lemma holds up to iteration $k$.
We need to prove that if $\Cert$ certifies $(v, k+1)$, then $v = v^*$.
The inductive hypothesis says between iteration $k^*$ to iteration $k$, all $\Commit$ certificates certify $v^*$. 
So at the beginning of iteration $k+1$, all honest replicas either have committed on $v^*$ or still accept $v^*$. 
If $\Cert$ certifies $(v,k+1)$, it must consist of $f+1$ valid $\Commit$ requests or $\Notify$ summaries for $v$. 
At least one of these is from an honest replica (call it $h_2$).
\begin{enumerate}
\item 
If replica $h_2$ has terminated before iteration $k+1$, its $\Notify$ summary (virtual $\Commit$ request) is for $v^*$ and we have $v=v^*$. 

\item
Otherwise, replica $h_2$ must have received from the leader a valid proposal (could be virtual) for $v$ in iteration $k+1$.
Note again that all honest replicas either have committed on $v^*$ or still accept $v^*$ at the beginning of iteration $k+1$. 
\begin{enumerate}
\item
If the proposal is a virtual one, 
in order for replica $h_2$ to consider it valid, 
$v$ must match $v_{h_2}$ (the value replica $h_2$ accepts), which is $v^*$.

\item
If the proposal is a normal one, then it must contain a safe value proof $P$ for $v$. 
$P$ must include at least one honest replica's $\Status$ or $\Notify$ summary for $(v^*,k^*)$ (or an even higher iteration number). 
Due to the inductive hypothesis, $P$ cannot contain a $\Commit$ certificate for $(v' ,k')$ where $k' \geq k^*$ and $v' \neq v^*$. 
Recall that $v$ must match the value in a summary that claims the highest iteration number.
Therefore, the only value that is safe to propose under $P$ is $v=v^*$.
\end{enumerate}
\end{enumerate} 
Therefore, we have $v=v'$ in all cases in the inductive step, completing the proof.
\end{proof}

\begin{theorem}[Safety]
If two honest replicas commit on $v$ and $v'$ respectively, then $v=v'$.
\label{thm:safety}
\end{theorem}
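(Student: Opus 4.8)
The plan is to obtain Theorem~\ref{thm:safety} as an essentially immediate corollary of Lemma~\ref{lemma:cert}. First I would fix $h$ to be the first honest replica to commit, breaking ties arbitrarily if several honest replicas commit in the same iteration; say $h$ commits on $v^*$ in iteration $k^*$. Such an $h$ is well defined because honest replicas commit in discrete iterations, so there is a smallest iteration index in which any honest replica commits. The whole theorem will then reduce to showing that every honest replica that ever commits must commit on this same $v^*$.

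Next I would unpack what it means for an honest replica to commit. By the Round~2 commit rule, if an honest replica $i$ commits on a value $w$ in iteration $k$, it must have received $f+1$ valid matching $\Commit$ requests for $w$ all carrying iteration number $k$ (and detected no leader equivocation), and it sets its long-term state $\Cert_i$ to the concatenation of exactly those $f+1$ $\Commit$ requests. By the certificate validity clause in Table~\ref{tab:valid} (``$\Cert$ certifies $(v,k)$''), this $\Cert_i$ certifies $(w,k)$. Moreover, since $h$ was the first honest replica to commit, the iteration $k$ in which $i$ commits satisfies $k \ge k^*$.

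The final step is to invoke Lemma~\ref{lemma:cert} with $\Cert = \Cert_i$, which certifies $(w,k)$ with $k \ge k^*$; the lemma yields $w = v^*$. Hence every honest replica that commits, commits on $v^*$. In particular, the two honest replicas in the statement commit on $v = v^*$ and $v' = v^*$ respectively, so $v = v'$, as desired.

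I do not expect a genuine obstacle here: all the substantive reasoning — that a Byzantine coalition cannot assemble a valid $\Commit$ certificate (or $\Notify$ summary) for any value other than $v^*$ in iterations $\ge k^*$, and the inductive argument that no future leader can propose a different value — is already encapsulated in Lemma~\ref{lemma:cert}. The only things to be careful about in writing the proof are (i) confirming that committing does produce a valid $\Commit$ certificate for the committed value in the committing iteration, which is immediate from the protocol text and Table~\ref{tab:valid}, and (ii) making the choice of the ``first honest committer'' $h$ precise enough (including the same-iteration tie-break) that the hypothesis $k \ge k^*$ of the lemma genuinely holds for any committing honest replica.
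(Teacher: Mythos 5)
Your proposal is correct and follows essentially the same route as the paper: identify the first honest committer $h$ with value $v^*$ in iteration $k^*$, observe that any honest commit yields a valid $\Commit$ certificate for the committed value in an iteration $k \ge k^*$, and apply Lemma~\ref{lemma:cert} to conclude both values equal $v^*$. The extra care you take about tie-breaking and about why committing produces a valid certificate is fine but not a departure from the paper's argument.
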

\begin{proof}
Suppose replica $h$ is the first honest replica to commit, and it commits on $v^*$ in iteration $k^*$.
In order for another honest replica to commit on $v$, there must be a valid $\Commit$ certificate $\Cert$ for $(v,k)$ where $k \geq k^*$.  
Due to Lemma~\ref{lemma:cert}, $v=v^*$.
Similarly, $v'=v^*$, and we have $v=v'$.
\end{proof}

Now we move on to liveness and show that an honest leader will guarantee that all honest replicas terminate by the end of that iteration. 

\begin{theorem}[Liveness]
If the leader $L$ in iteration $k$ is honest, then every honest replica terminates at the end of iteration $k$ (if it has not already terminated before iteration $k$). 
\end{theorem}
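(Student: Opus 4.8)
The plan is to walk through the four rounds of iteration $k$ and show that every honest replica that is still active at the start of iteration $k$ collects $f+1$ matching $\Commit$ requests without detecting equivocation, hence commits in Round~2 and terminates in Round~3. The key counting fact I will use throughout is that $n=2f+1$ forces the total number of honest replicas to be at least $f+1$; some of these may already have terminated, but the sum of (active honest) and (terminated honest) is always at least $f+1$.

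First I would argue that the honest leader $L$ can always build a valid safe value proof $P$ and send a valid proposal. In Round~0 every active honest replica sends $L$ a valid $\Status$ message, and every honest replica that terminated earlier has already broadcast a valid $\Notify$ message which $L$, being honest, received before iteration $k$ and may use as a virtual $\Status$ summary; since each such message carries its own $\Commit$ certificate, $L$ has at least $f+1$ valid summaries together with a $\Commit$ certificate for the one claiming the highest iteration (or $\bot$ if all claim $0$), so $L$ forms a valid $P$ and proposes some $v$ safe under $P$. Because $L$ is honest it sends the same valid proposal to everyone, so every active honest replica $i$ sets $\vLi:=v$ at the end of Round~1. (If $L$ itself terminated before iteration $k$, then every active honest replica instead treats $L$'s earlier $\Notify$ summary as a virtual proposal and sets $\vLi$ to $L$'s committed value; the remaining argument is unchanged.)

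Next I would show that $v$ matches the value every honest replica will vouch for, and hence that every active honest replica receives $f+1$ matching $\Commit$ requests in Round~2. If no honest replica has ever committed, there are no terminated honest replicas, so all $\ge f+1$ honest replicas are active, each sends $\sig{k,\Commit,v}_j$ to everyone, and the threshold is met. If some honest replica has committed, let $v^*$ be that value, which is well defined by Theorem~\ref{thm:safety}. One of the $f+1$ summaries in $P$ is from an honest replica, which by iteration $k$ has accepted (or committed and terminated on) $v^*$ at an iteration $\ge k^*$; hence the highest iteration claimed in $P$ is $\ge k^*$, and by Lemma~\ref{lemma:cert} the accompanying $\Commit$ certificate certifies $v^*$, forcing $v=v^*$. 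Then each active honest replica sends $\sig{k,\Commit,v^*}_j$, and each terminated honest replica's $\Notify$ summary counts as a virtual $\Commit$ request for $v^*$, so again every active honest replica sees at least $f+1$ matching $\Commit$ requests for $v=\vLi$.

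Finally I would rule out a spurious equivocation detection: an active honest replica aborts its commit only if it is forwarded a valid $\sig{k,\Propose,v'}_L$ (or a differing virtual proposal) with $v'\neq\vLi$, but $L$ is honest and signed only one proposal, so by unforgeability of signatures no such $v'$ exists. Hence every active honest replica commits on $v$ in Round~2, sends a $\Notify$ in Round~3, and terminates. The main obstacle I expect is the bookkeeping around non-simultaneous termination: one must verify that terminated honest replicas' $\Notify$ summaries legitimately substitute for both $\Status$ messages (so $L$ can still assemble $P$) and $\Commit$ requests (so active replicas still reach $f+1$), and that this substitution is consistent with the value $v^*$ pinned down by Lemma~\ref{lemma:cert}; the rest is routine round-by-round verification plus signature unforgeability.
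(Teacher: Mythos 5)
Your proof is correct and follows essentially the same route as the paper's: the honest leader sends one consistent (possibly virtual) proposal, terminated replicas' $\Notify$ summaries stand in for $\Status$ and $\Commit$ messages, Lemma~\ref{lemma:cert} pins the proposed value to any previously committed $v^*$, and signature unforgeability rules out false equivocation. You actually spell out two steps the paper leaves implicit — that $L$ can always assemble a valid $P$, and that terminated replicas' virtual $\Commit$ requests match the proposed value — so no gaps.
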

\begin{proof}
The honest leader $L$ will send a proposal (could be virtual) to all replicas.
If $L$ has not terminated, it will send a valid proposal for a value $v$ that is safe to propose, and attach a valid proof $P$.
If $L$ has committed on $v$ and terminated, then all honest replicas have either committed on $v$ or accept $v$ at the beginning of iteration $k$ (see proof of Lemma~\ref{lemma:cert}), 
so they all consider $L$'s virtual proposal valid.
Additionally, the unforgeability of digital signatures prevents Byzantine replicas from falsely accusing $L$ of equivocating.
Therefore, all honest replicas (terminated or otherwise) will send $\Commit$ requests (could be virtual) for $v$, receive $f+1$ $\Commit$ requests for $v$ and terminate at the end of the iteration.
\end{proof}

Finally, we mention an interesting scenario that does not have to be explicitly addressed in the proofs.
Before any honest replica commits, Byzantine replicas may obtain $\Commit$ certificates for multiple different values in the same iteration. 
In particular, the Byzantine leader proposes two values $v$ and $v'$ to all the $f$ Byzantine replicas.
(An example with more than two values is similar.) 
Byzantine replicas then exchange $f$ $\Commit$ requests for both values among them.
Additionally, the Byzantine leader proposes $v$ and $v'$ to different honest replicas.
Now with one more $\Commit$ request for each value from honest replicas,
Byzantine replicas can obtain $\Commit$ certificates for both $v$ and $v'$, 
and can make honest replicas accept different values by showing them different $\Commit$ certificates ($\Notify$ messages).
However, this will not lead to a safety violation because no honest replica would have committed in this iteration:
the leader has equivocated to honest replicas, so all honest replicas will detect equivocation from forwarded proposals and thus refuse to commit.
This scenario showcases the necessity of both the synchrony assumption and the use of digital signatures for our protocol. 
Lacking either one, equivocation cannot be reliably detected and any protocol will be subject to the $f<n/3$ bound.
For completeness, we note that the above scenario will not lead to a liveness violation, either.
In the next iteration, honest replicas consider a proposal for any value (including but not limited to $v$ and $v'$) to be valid as long as it contains a valid safe value proof $P$ for that value.


\section{Byzantine Fault Tolerant State Machine Replication}	\label{sec:smr}

\subsection{Model and Overview}

The state machine replication approach for fault tolerance considers a scenario where clients submit requests to a replicated service~\cite{Lamport78,Schneider90}.
The replicated service should provide safety and liveness even when some replicas are faulty ($f<n/2$ Byzantine faults in our case).
Safety means the service behaves like a single non-faulty server, and liveness means the service keeps processing client requests and eventually commits every request~\cite{Schneider90,PBFT}. 
To satisfy safety, honest replicas should agree on a sequence of values and their order. 
We say each value in the sequence occupies a \emph{slot}.

We remark that BFT state machine replication requires an honest majority (i.e., $n \geq 2f+1$) even in the synchronous setting~\cite{Schneider90}.
Otherwise, Byzantine replicas in the majority can convince clients of a decision and later deny having ever committed that decision.
This is in contrast to Byzantine broadcast which can be solved even for $n/2 \leq f \leq n-2$.
In Byzantine broadcast, honest parties just need to stay in agreement with each other and do not have to convince external entities (e.g., clients) of the correct system states.
This distinction between BFT state machine replication and Byzantine broadcast becomes unimportant in the asynchronous setting in which both problems require $n \geq 3f+1$.

The rest of this section presents two state machine replication protocols.
We start with a basic protocol, which extends the synod protocol with minimum modifications and requires amortized 2 iterations (8 rounds) per slot.
The second protocol improves the common case (with a stable leader) at the cost of more expensive view (leader) changes.
It achieves 1 iteration (3 rounds) per slot in the common case.

\subsection{A Basic Protocol}	\label{sec:smr:basic}
The basic protocol essentially runs a series of synod instances sequentially.
To start, the following modifications are natural. 
\begin{enumerate}
\item Each replica $i$ internally maintains an additional long-term state $s_i$ to denote which slot it is currently working on. $s_i=s$ means replica $i$ has committed for slots $\{1,2,\cdots,s-1\}$, and has not committed for slots $s, s+1, s+2, \cdots$.

\item All messages in Section~\ref{sec:main-protocol} and Table~\ref{tab:valid} contain an additional slot number $s$.
The four types of messages now have the form: 
$\sig{\sig{s, k, \Status, v_i, k_i}_i, \Cert_i}_i$,
$\sig{\sig{s, k, \Propose, v}_L, P}_L$,
$\sig{s, k, \Commit, v}_i$, and
$\sig{\sig{s, k, \Notify, v}_i, \Cert}_i$. 
A $\Commit$ certificate $\Cert$ now certifies a triplet $(s,v,k)$ and must contain $f+1$ $\Commit$ requests (could be virtual) for the same slot $s$.
Similarly, a safe value proof $P$ must now contain $f+1$ $\Status$ summaries (could be virtual) for the same slot $s$.
\end{enumerate}

Replica $i$ follows the protocol in Section~\ref{sec:main-protocol} to send and react to messages for the slot $s_i$ it is currently working on. 
If replica $i$ receives messages for a past slot $s<s_i$, it can safely ignore them because its virtual messages ($\Notify$ messages) for slot $s$ will suffice to help any honest replicas terminate for that slot.  
The more interesting question is how to react to messages for future slots. 
We observe that it is vital that all honest replicas accept values for future slots upon receiving valid $\Notify$ messages.
If the sender of the $\Notify$ message is an honest replica, then all other honest replicas must accept that value to prevent any other value from being proposed for that slot. 
However, apart from $\Notify$ messages, honest replicas can ignore all other messages for future slots without violating safety.
The proof for safety from Section~\ref{sec:proof} still holds since it only relies on honest replicas \emph{not sending improper messages}.
This implies the following changes:
\begin{enumerate}[resume]
\item The accepted states of a replica $i$ are now per slot: $\Accepted_i[s] = (v_i[s], k_i[s], \Cert_i[s])$. 
Upon receiving a valid $\Notify$ message $\sig{\sig{s, k, \Notify, v}_j, \Cert}_j$ for a future slot $s>s_i$, 
replica $i$ sets $\Accepted_i[s] := (v, k, \Cert)$.
Replica $i$ ignores messages other than $\Notify$ for future slots.
\end{enumerate}

We next analyze liveness and the amortized round complexity for our state machine replication protocol.
Note that now an honest leader may not be able to ensure termination for the slot it is working on because some honest replicas may be lagging behind.
They may not react to its proposal; in fact, the leader may not even gather enough $\Status$ messages to construct a valid proposal.
However, each honest leader can still guarantee termination for at least one slot: the lowest numbered slot $s^*$ that any honest replica is working on.
After this iteration, all honest replicas will at least be working on slot $s^*+1$.
Therefore, we simply rotate the leader in a round robin fashion.
This way, we can fill at least $f+1$ slots in a full rotation of $2f+1$ iterations, thereby achieving amortized 2 iterations (8 rounds) per slot.

\paragraph{Reply to clients.}
In our protocol, a single replica's $\Notify$ message is insufficient to convince a client that a value is committed.
The reason is that a Byzantine replica may not send $\Notify$ message to all other replicas, in which case a different value may be committed later.
Instead, a client needs to see $f+1$ $\Notify$ messages (summaries suffice) from distinct replicas to be confident of a committed value.
When $n$ is large, it may be costly for a client to maintain connections to all $n$ replicas. 
An alternative solution is to let a replica gather $f+1$ $\Notify$ summaries at the end of Round 3, and send them to a client in a single message.  
We call these $f+1$ $\Notify$ summaries a $\Notify$ certificate $\NotSum$.
A client can be assured of a committed value by receiving $\NotSum$ from a single replica at the cost of one extra round of latency.

\subsection{Towards Stable Leaders}
The previous protocol replaces the leader after every iteration.
This is not ideal since faulty leaders may prevent progress in their iterations.
As a result, the previous protocol may only make progress every one out of two iterations on average. 
A better design, which is common in PBFT-style protocols, is to keep a stable leader in charge and replace the leader only if it is detected to be faulty. 
This way, once an honest leader is in control, a slot can be committed after every iteration. 

Following PBFT's terminology, we say a protocol proceeds in a series of \emph{views}. 
Each view has a unique leader and we again rotate the leader in a round robin fashion.
A view can last many iterations, potentially indefinitely as long as the leader keeps making progress.
Suppose the current view is view $l$ and its leader is $L_l$. 
(Note that $L_l$ has been redefined to be the leader of a view as against an iteration.)
At a high level, if an honest replica $i$ detects that $L_l$ is faulty, 
it broadcasts a $\sig{\Blame, l+1}_i$ message,
which can be thought of as an accusation against $L_l$. 
If the next leader $L_{l+1}$ gathers $f+1$ $\sig{\Blame, l+1}_i$ messages from distinct replicas, 
it broadcasts a $\NewView$ message to become the new leader and the system enters view $l+1$.
Though the above idea seems simple at a high level, the details involved are quite intricate. 
Due to lack of space, we give an overview of the key challenges and our solutions here and present the detailed protocol and the proof in Appendix~\ref{sec:smr:stable_leader}.

First note that a faulty leader can always cause honest replicas to disagree on whether or not it is faulty.
To do so, the faulty leader and all Byzantine replicas just behave normally to some honest replicas (call them group 1) and remain silent to other honest replicas (call them group 2).
In this case, replicas in group 2 will accuse the leader, but they cannot convince group 1 because from group 1's perspective, it is entirely possible that replicas in group 2 are malicious and are falsely accusing an innocent leader.
Fortunately, we can still ensure progress by utilizing the following property of our synod protocol:
If $L_l$ is honest, then a replica expects to not only to commit but also receive at least $f+1$ $\Notify$ messages from distinct replicas (including itself) at the end of Round 3 of each iteration.
These $f+1$ $\Notify$ messages form a $\Notify$ certificate $\NotSum$ and convince any other replica to commit (cf. Section~\ref{sec:smr:basic}).
If replica $i$ does not receive $f+1$ $\Notify$ messages at the end of Round 3, then it knows the current leader is faulty and broadcasts $\Blame$.
Therefore, after each iteration, either some honest replica obtains the ability to convince other replicas to commit, or all honest replica accuse the current leader and the next leader can start a new view. 

However, to complicate the matter, the next leader may also be Byzantine.
It may not send $\NewView$ when it is supposed to, or send it only to a subset of honest replicas.
We need to ensure such behaviors do not violate safety or liveness.
For safety, we would like to ensure that two honest replicas are never in different views. 
Unfortunately, we do not see a way to keep all honest replicas always in the same view.
Instead, our protocol guarantees the following: if an honest replica enters view $l+1$, then all other honest replicas exit view $l$ --- they may not enter view $l+1$, which means they temporarily may not be in any view.
To guarantee liveness, these ``out-of-view'' replicas must eventually enter a future view.  
This is achieved by ensuring that, if $L_{l+1}$ does not send $\NewView$ or tries to prevent progress in any other way, then all honest replicas will accuse $L_{l+1}$. 
Meanwhile, we also need to ensure that an honest replica will not be tricked into accusing an honest future leader.
Thus, before accusing $L_{l+1}$, an honest replica $i$ sends the $f+1$ $\Blame$ messages (a $\Blame$ certificate) to $L_{l+1}$;
if $L_{l+1}$ still does not step up in the round after, then replica $i$ can be certain that $L_{l+1}$ is faulty.

\paragraph{Message complexity and digital signatures.}
The $\Status$ round and the full $\Notify$ round can be pushed to the view change procedure.
With this change, our improved protocol will have very similar complexity as the Practical Byzantine Fault Tolerance (PBFT) protocol.
Like PBFT, in the common case (i.e., under a stable leader), our protocol has three rounds, two of which require all-to-all communication of $O(1)$-sized messages. 
$\Theta(n)$-sized messages are needed only for view changes.
System-level optimizations like parallel slots, checkpoints and garbage collection~\cite{PBFT} can also be added. 
It is worth noting that we require digital signatures even in the common case, whereas PBFT uses digital signatures only for view changes. 
This was a major contribution of PBFT two decades ago as digital signatures were very slow back then. 
But with computation becoming cheaper and the development of more efficient signature schemes~\cite{ECDSA}, 
we believe the use of signatures is less of a concern today. 
Perhaps a more important question is whether the synchrony assumption itself is practical.
The next subsection discusses this topic.

\subsection{Clock Synchronization}

The synchrony assumption essentially states that all honest replicas' messages arrive in time.
This requires two properties:
(i) a bounded message delay and 
(ii) locked step execution, i.e., honest replicas enter each round roughly at the same time. 
The second property is important because, if replica $i$ enters a round much earlier than replica $j$, then $i$ may end up finishing the round too soon without waiting for $j$'s message to arrive.
In our protocol, for example, this could prevent $i$ from detecting leader equivocation and result in a safety violation.  

The XFT paper provided some justification for the bounded message delay assumption in certain applications~\cite{XFT}. 
But we still a mechanism to enforce locked step execution.
To this end, we will use the following clock synchronization protocol.
It will be executed at known time intervals.
We call each interval a ``day''. 
\begin{itemize}
\addtolength{\itemindent}{32pt}
\item[\textbf{Round 0}] ($\Sync$)
When replica $i$'s clock reaches the beginning of day $X$, it sends a $\sig{\Sync, X}_i$ message to all replicas including itself.
\item[\textbf{Round 1}] ($\SetTime$) 
The first time a replica $j$ receives $f+1$ $\sig{\Sync, X}$ messages from distinct replicas (either as $f+1$ separate $\Sync$ messages or within a single $\SetTime$ message), it
\begin{itemize}
\item sets its clock to the beginning of day $X$, and
\item sends all other replicas a $\SetTime$ message, which is the concatenation of $f+1$ $\sig{\Sync,X}$ messages from distinct replicas.
\end{itemize}
\end{itemize}

The above protocol refreshes honest replicas' clock difference to at most the message delay bound $\delta$ at the beginning of each day. 
The first honest replica to start a new day will broadcast a $\SetTime$ message, which makes all other honest replicas start the new day within $\delta$ time.
Obtaining a $\SetTime$ message also means at least one honest replica has sent a valid $\Sync$ message, ensuring that roughly one day has indeed passed since the previous day. 
We can then set the duration of each round to $2\delta+\drift$ where $\drift$ is the maximum clock drift between two honest replicas in a ``day''.

This clock synchronization protocol may be of independent interest to synchronous protocols other than Byzantine consensus.
We also note that it does not require a locked step execution.
Each $\Sync$ message is triggered by a replica's own local clock, independent of when day $X$ would start for other replicas. 

\paragraph{Best-case optimization.}
A replica does not need to send a $\SetTime$ message, if
(i) it receives $2f+1$ distinct $\sig{\Sync, X}$ messages (as separate $\Sync$ messages, via $\SetTime$ messages or a mixture of both), or (ii) it receives $f+1$ distinct $\SetTime$ messages.
In scenario (i), all honest replicas have sent a $\Sync$ message.
In scenario (ii), some honest replica has sent $\SetTime$.
So the replica can be assured that all other honest replicas will enter the new day within $\delta$ time.


\section{Byzantine Broadcast and Agreement}	\label{sec:BABB}

\subsection{Byzantine Broadcast}

In Byzantine broadcast, there is a designated \emph{sender} who tries to broadcast a value to $n$ parties.
A solution needs to satisfy three requirements:
\begin{enumerate}[itemsep=0pt]
\item[] \textbf{(termination)} all honest parties eventually commit,
\item[] \textbf{(agreement)} all honest parties commit on the same value, and
\item[] \textbf{(validity)} if the sender is honest, then all honest parties commit on the value it broadcasts.
\end{enumerate} 
In this section, we describe a protocol that solves synchronous authenticated Byzantine broadcast for the $f<n/2$ case.

\paragraph{A ``pre-round''.}
Our core protocol in Section~\ref{sec:main-protocol} can be used to satisfy the agreement and termination requirement. 
But to satisfy validity, we need to prepend an extra round to allow the designated sender to broadcast its value.
We will call this the \emph{pre-round}.
Let replica $L_s$ be the designated sender.  
In the pre-round, $L_s$ broadcasts a signed value $\sig{v_s}_{L_s}$ to every replica.
At the end of the pre-round, if replica $i$ receives $\sig{v_s}_{L_s}$ from $L_s$, then replica $i$ accepts $v_s$ by setting its states $(v_i, k_i) := (v_s, 0)$ and $\Cert_i := \sig{v_s}_{L_s}$.
Note that the certificate for accepting $v_s$ is simply a valid signature from $L_s$.

\paragraph{Main loop.}
After the pre-round, we enter the main loop of the core protocol in Section~\ref{sec:main-protocol}. 
The first iteration of the main loop is now slightly different. 
Before, the leader $L_1$ of the first iteration will certainly find in its $\Status$ round that ``no replica has accepted anything'', and is hence free to propose any value.
But now, replicas may have accepted values from $L_s$.

\paragraph{Agreement and Validity.}
If the designated sender $L_s$ is honest, then all honest replicas will accept its value before entering the main loop. 
Thus, any safe value proof $P$ that $L_1$ can construct will contain at least one $\Status$ message vouching for $v_s$.
Following an inductive proof similar to Theorem~\ref{thm:safety}, no value other than $v_s$ can have a certificate or be proposed in any iteration, satisfying validity.
The first honest leader in the main loop will ensure agreement and termination.
If $L_s$ is faulty, it may send no value or multiple values in the pre-round.
As before, in the $\Propose$ round, if the leader finds a tie for the highest iteration number (define ``pre-round'' to have iteration number 0) between multiple legally accepted values, it is free to pick any of them.
This means if $L_s$ sends multiple values (or no value), it is up to future leaders to pick (or propose) a value for the replicas to agree on.
Agreement and termination are guaranteed by the core protocol.

\paragraph{Efficiency.}
Assuming we have a random leader oracle, there is a $(f+1)/(2f+1) > 1/2$ probability that each leader after the first iteration is honest, so the protocol terminates in expected 2 iterations after the pre-round. 
Since honest replicas terminate without waiting for the $\Notify$ messages in the last iteration,
the protocol terminates in $1 + 2 \times 4 - 1 = 8$ rounds in expectation. 

\paragraph{Random leader oracle.}
The remaining question is how to elect random leaders.
For this step, we can adopt the moderated verifiable secret sharing approach from Katz and Koo~\cite{KK06} or the unique signature approach from Algorand~\cite{Algorand}.
The resulting protocol does not yet achieve expected constant rounds against an adaptive adversary who can corrupt a leader as soon as it is elected.
Inspired by prior work~\cite{KK06,Algorand}, the solution is to elect a random leader only \emph{after} it has fulfilled the leader's responsibility.
Adapting the idea to our protocol, every party should act as a leader in the $\Status$ round and the $\Propose$ round of each iteration to collect states and make a proposal.
A random leader is then elected after the $\Propose$ round and before the $\Commit$ round.
This ensures a $1/2$ chance of having an honest leader in each iteration even against an adaptive adversary. 

\subsection{Byzantine Agreement}
In Byzantine agreement, every party holds an initial input value.
A solution needs to satisfy the same termination and agreement requirements as in Byzantine broadcast.
There exist a few different validity notions.
We adopt a common one known as strong unanimity~\cite{DLS88}:
\begin{enumerate}[itemsep=0pt]
\item[] \textbf{(validity)} if all honest parties hold the same input value $v$, then they all commit on $v$.
\end{enumerate} 
To solve Byzantine agreement, we can use a classical transformation~\cite{PSL80,LSP82}.
Each party initiates a Byzantine broadcast in parallel to broadcast its value. 
After the broadcast, every honest party will share the same vector of values {$V=\{v_1, v_2, \cdots, v_n\}$}.
If party $i$ is honest, then $v_i$ will be its input value.
Each party can then just output the most frequent value in the vector $V$.
Agreement is reached since all honest parties share the same $V$.
If all honest parties start with the same input value, then that value will be the most frequent in $V$, achieving validity. 
After the first iteration, we can elect a single leader for all the parallel broadcast instances.
Thus, the agreement protocol will have the same round complexity as the broadcast protocol.

\section{Conclusion and Future Work}
This paper has described a 4-round synod protocol that tolerates $f$ Byzantine faults using $2f+1$ replicas in the synchronous and authenticated setting.
We then apply the synod protocol to achieve BFT state machine replication in amortized 3 rounds per slot and solve Byzantine agreement in expected 8 rounds.
Our protocols may be applied to build synchronous BFT systems or improve cryptographic protocols. 

\paragraph{Acknowledgements.}
The authors thank Dahlia Malkhi for helpful discussions.
This work is funded in part by NSF award \#1518765, a Google Ph.D. Fellowship award, the HUJI Cyber Security Research Center and the Israel National Cyber Bureau in the Prime Minister's Office.

\bibliographystyle{plainurl}
\bibliography{refs}

\begin{thebibliography}{10}

\bibitem{Abd05}
Michael Abd-El-Malek, Gregory~R Ganger, Garth~R Goodson, Michael~K Reiter, and
  Jay~J Wylie.
\newblock Fault-scalable byzantine fault-tolerant services.
\newblock In {\em ACM SIGOPS Operating Systems Review}, volume~39, pages
  59--74. ACM, 2005.

\bibitem{Solidus}
Ittai Abraham, Dahlia Malkhi, Kartik Nayak, Ling Ren, and Alexander Spiegelman.
\newblock Solidus: An incentive-compatible cryptocurrency based on
  permissionless byzantine consensus.
\newblock arXiv preprint, 2016.

\bibitem{Farsite}
Atul Adya, William~J Bolosky, Miguel Castro, Gerald Cermak, Ronnie Chaiken,
  John~R Douceur, Jon Howell, Jacob~R Lorch, Marvin Theimer, and Roger~P
  Wattenhofer.
\newblock {FARSITE}: Federated, available, and reliable storage for an
  incompletely trusted environment.
\newblock {\em ACM SIGOPS Operating Systems Review}, 36(SI):1--14, 2002.

\bibitem{BenOr83}
Michael Ben-Or.
\newblock Another advantage of free choice (extended abstract): Completely
  asynchronous agreement protocols.
\newblock In {\em Proceedings of the second annual ACM symposium on Principles
  of distributed computing}, pages 27--30. ACM, 1983.

\bibitem{BGW88}
Michael Ben-Or, Shafi Goldwasser, and Avi Wigderson.
\newblock Completeness theorems for non-cryptographic fault-tolerant
  distributed computation.
\newblock In {\em Proceedings of the twentieth annual ACM symposium on Theory
  of computing}, pages 1--10. ACM, 1988.

\bibitem{Bracha87}
Gabriel Bracha.
\newblock Asynchronous byzantine agreement protocols.
\newblock {\em Information and Computation}, 75(2):130--143, 1987.

\bibitem{PBFT}
Miguel Castro and Barbara Liskov.
\newblock Practical byzantine fault tolerance.
\newblock In {\em Proceedings of the third symposium on Operating systems
  design and implementation}, pages 173--186. USENIX Association, 1999.

\bibitem{PBFT02}
Miguel Castro and Barbara Liskov.
\newblock Practical byzantine fault tolerance and proactive recovery.
\newblock {\em ACM Transactions on Computer Systems}, 20(4):398--461, 2002.

\bibitem{Chun07}
Byung-Gon Chun, Petros Maniatis, Scott Shenker, and John Kubiatowicz.
\newblock Attested append-only memory: Making adversaries stick to their word.
\newblock In {\em ACM SIGOPS Operating Systems Review}, volume~41, pages
  189--204. ACM, 2007.

\bibitem{Correia04}
Miguel Correia, Nuno~Ferreira Neves, and Paulo Verissimo.
\newblock How to tolerate half less one byzantine nodes in practical
  distributed systems.
\newblock In {\em Reliable Distributed Systems, 2004. Proceedings of the 23rd
  IEEE International Symposium on}, pages 174--183. IEEE, 2004.

\bibitem{HQRep06}
James Cowling, Daniel Myers, Barbara Liskov, Rodrigo Rodrigues, and Liuba
  Shrira.
\newblock {HQ} replication: A hybrid quorum protocol for byzantine fault
  tolerance.
\newblock In {\em 7th symposium on Operating systems design and
  implementation}, pages 177--190. USENIX Association, 2006.

\bibitem{DS83}
Danny Dolev and H.~Raymond Strong.
\newblock Authenticated algorithms for byzantine agreement.
\newblock {\em SIAM Journal on Computing}, 12(4):656--666, 1983.

\bibitem{DLS88}
Cynthia Dwork, Nancy Lynch, and Larry Stockmeyer.
\newblock Consensus in the presence of partial synchrony.
\newblock {\em Journal of the ACM}, 35(2):288--323, 1988.

\bibitem{FM97}
Pesech Feldman and Silvio Micali.
\newblock An optimal probabilistic protocol for synchronous byzantine
  agreement.
\newblock {\em SIAM Journal on Computing}, 26(4):873--933, 1997.

\bibitem{FL82}
Michael~J. Fischer and Nancy~A. Lynch.
\newblock A lower bound for the time to assure interactive consistency.
\newblock {\em Information processing letters}, 14(4):183--186, 1982.

\bibitem{FLP85}
Michael~J. Fischer, Nancy~A. Lynch, and Michael~S. Paterson.
\newblock Impossibility of distributed consensus with one faulty process.
\newblock {\em Journal of the ACM}, 32(2):374--382, 1985.

\bibitem{Garay98}
Juan~A Garay and Yoram Moses.
\newblock Fully polynomial byzantine agreement for $n > 3f$ processors in $f+1$
  rounds.
\newblock {\em SIAM Journal on Computing}, 27(1):247--290, 1998.

\bibitem{GMW87}
Shafi Goldwasser, Silvio Micali, and Avi Wigderson.
\newblock How to play any mental game, or a completeness theorem for protocols
  with an honest majority.
\newblock In {\em Proc. of the 19th Annual ACM STOC}, volume~87, pages
  218--229, 1987.

\bibitem{ECDSA}
Don Johnson, Alfred Menezes, and Scott Vanstone.
\newblock The elliptic curve digital signature algorithm (ecdsa).
\newblock {\em International Journal of Information Security}, 1(1):36--63,
  2001.

\bibitem{CheapBFT12}
R{\"u}diger Kapitza, Johannes Behl, Christian Cachin, Tobias Distler, Simon
  Kuhnle, Seyed~Vahid Mohammadi, Wolfgang Schr{\"o}der-Preikschat, and Klaus
  Stengel.
\newblock Cheapbft: resource-efficient byzantine fault tolerance.
\newblock In {\em Proceedings of the 7th ACM european conference on Computer
  Systems}, pages 295--308. ACM, 2012.

\bibitem{KK06}
Jonathan Katz and Chiu-Yuen Koo.
\newblock On expected constant-round protocols for byzantine agreement.
\newblock {\em J. Comput. Syst. Sci.}, 75(2):91--112, 2009.

\bibitem{Byzcoin}
Eleftherios~Kokoris Kogias, Philipp Jovanovic, Nicolas Gailly, Ismail Khoffi,
  Linus Gasser, and Bryan Ford.
\newblock Enhancing bitcoin security and performance with strong consistency
  via collective signing.
\newblock In {\em 25th USENIX Security Symposium}, pages 279--296. USENIX
  Association, 2016.

\bibitem{Zyzzyva}
Ramakrishna Kotla, Lorenzo Alvisi, Mike Dahlin, Allen Clement, and Edmund Wong.
\newblock Zyzzyva: speculative byzantine fault tolerance.
\newblock In {\em ACM SIGOPS Operating Systems Review}, volume~41, pages
  45--58. ACM, 2007.

\bibitem{Kotla04}
Ramakrishna Kotla and Mike Dahlin.
\newblock High throughput byzantine fault tolerance.
\newblock In {\em Dependable Systems and Networks}, pages 575--584. IEEE, 2004.

\bibitem{Oceanstore}
John Kubiatowicz, David Bindel, Yan Chen, Steven Czerwinski, Patrick Eaton,
  Dennis Geels, Ramakrishan Gummadi, Sean Rhea, Hakim Weatherspoon, Westley
  Weimer, and Ben Zhao.
\newblock Oceanstore: An architecture for global-scale persistent storage.
\newblock {\em ACM Sigplan Notices}, 35(11):190--201, 2000.

\bibitem{Lamport78}
Leslie Lamport.
\newblock Time, clocks, and the ordering of events in a distributed system.
\newblock {\em Communications of the ACM}, 21(7):558--565, 1978.

\bibitem{Paxos}
Leslie Lamport.
\newblock The part-time parliament.
\newblock {\em ACM Transactions on Computer Systems}, 16(2):133--169, 1998.

\bibitem{LSP82}
Leslie Lamport, Robert Shostak, and Marshall Pease.
\newblock The byzantine generals problem.
\newblock {\em ACM Transactions on Programming Languages and Systems},
  4(3):382--401, 1982.

\bibitem{Lindell02}
Yehuda Lindell, Anna Lysyanskaya, and Tal Rabin.
\newblock Sequential composition of protocols without simultaneous termination.
\newblock In {\em Proceedings of the twenty-first annual symposium on
  Principles of distributed computing}, pages 203--212. ACM, 2002.

\bibitem{XFT}
Shengyun Liu, Christian Cachin, Vivien Qu{\'e}ma, and Marko Vukolic.
\newblock {XFT}: practical fault tolerance beyond crashes.
\newblock In {\em 12th USENIX Symposium on Operating Systems Design and
  Implementation}, pages 485--500. USENIX Association, 2016.

\bibitem{Martin06}
J-P Martin and Lorenzo Alvisi.
\newblock Fast byzantine consensus.
\newblock {\em IEEE Transactions on Dependable and Secure Computing},
  3(3):202--215, 2006.

\bibitem{Algorand}
Silvio Micali.
\newblock Algorand: The efficient and democratic ledger.
\newblock arXiv preprint, 2016.

\bibitem{Viewstamped}
Brian~M. Oki and Barbara~H. Liskov.
\newblock Viewstamped replication: A new primary copy method to support
  highly-available distributed systems.
\newblock In {\em Proceedings of the seventh annual ACM Symposium on Principles
  of distributed computing}, pages 8--17. ACM, 1988.

\bibitem{HybridConsensus}
Rafael Pass and Elaine Shi.
\newblock Hybrid consensus: Efficient consensus in the permissionless model.
\newblock Cryptology ePrint Archive, Report 2016/917, 2016.

\bibitem{PSL80}
Marshall Pease, Robert Shostak, and Leslie Lamport.
\newblock Reaching agreement in the presence of faults.
\newblock {\em Journal of the ACM}, 27(2):228--234, 1980.

\bibitem{Rabin83}
Michael~O. Rabin.
\newblock Randomized byzantine generals.
\newblock In {\em Proceedings of the 24th Annual Symposium on Foundations of
  Computer Science}, pages 403--409. IEEE, 1983.

\bibitem{Omega96}
Michael~K. Reiter, Matthew~K. Franklin, John~B. Lacy, and Rebecca~N. Wright.
\newblock The {$\Omega$} key management service.
\newblock In {\em Proceedings of the 3rd ACM conference on Computer and
  communications security}, pages 38--47. ACM, 1996.

\bibitem{BASE01}
Rodrigo Rodrigues, Miguel Castro, and Barbara Liskov.
\newblock {BASE}: Using abstraction to improve fault tolerance.
\newblock In {\em ACM SIGOPS Operating Systems Review}, volume~35, pages
  15--28. ACM, 2001.

\bibitem{Schneider90}
Fred~B Schneider.
\newblock Implementing fault-tolerant services using the state machine
  approach: A tutorial.
\newblock {\em ACM Computing Surveys}, 22(4):299--319, 1990.

\bibitem{Veronese13}
Giuliana~Santos Veronese, Miguel Correia, Alysson~Neves Bessani, Lau~Cheuk
  Lung, and Paulo Verissimo.
\newblock Efficient byzantine fault-tolerance.
\newblock {\em IEEE Transactions on Computers}, 62(1):16--30, 2013.

\bibitem{Wallach03}
Dan~S Wallach, Peter Druschel, et~al.
\newblock Enforcing fair sharing of peer-to-peer resources.
\newblock In {\em International Workshop on Peer-to-Peer Systems}, pages
  149--159. Springer, 2003.

\bibitem{Yin03}
Jian Yin, Jean-Philippe Martin, Arun Venkataramani, Lorenzo Alvisi, and Mike
  Dahlin.
\newblock Separating agreement from execution for byzantine fault tolerant
  services.
\newblock {\em ACM SIGOPS Operating Systems Review}, 37(5):253--267, 2003.

\bibitem{COCA02}
Lidong Zhou, Fred Schneider, and Robbert van Renesse.
\newblock {COCA}: A secure distributed online certification authority.
\newblock {\em ACM Transactions on Computer Systems}, 20(4):329--368, 2002.

\end{thebibliography}

\newpage
\appendix

\section{An Illustration of the Core Protocol}
Figure~\ref{fig:protocol} gives an illustration of the core synod protocol in Section~\ref{sec:main-protocol}.

\begin{figure}[bt]
\centering
\includegraphics[width=0.65\textwidth]{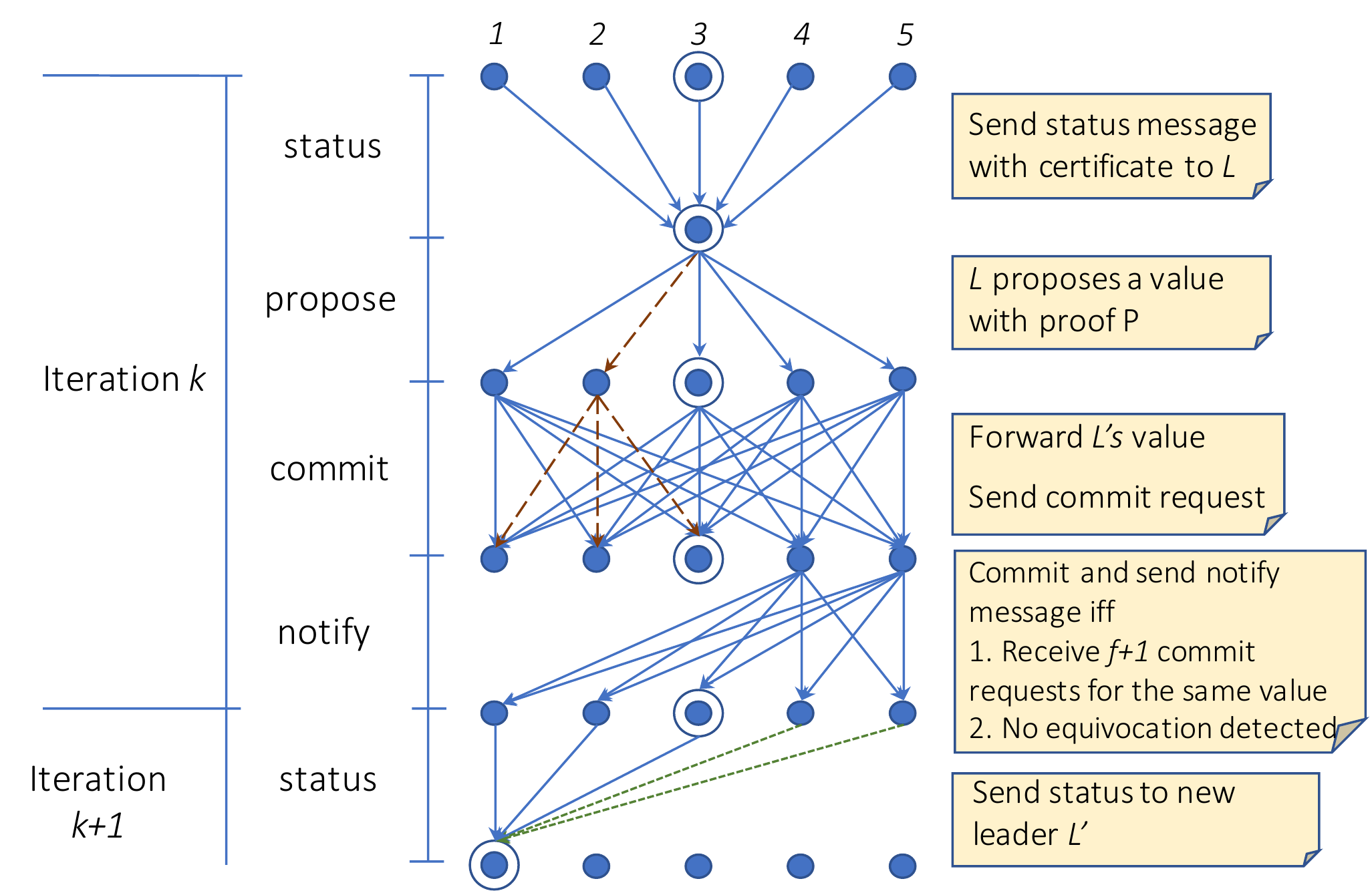}
\caption{\textbf{Synchronous Byzantine Synod Protocol with $n = 2f+1$}. 
The figure shows an example iteration.
In this example, replicas 2 and 3 are Byzantine and replica 3 is the leader $L$ in iteration $k$.
\textbf{0.} $(\Status)$ Each replica sends its current states to $L$. 
\textbf{1.} $(\Propose)$ 
No replica has committed or accepted any value, so $L$ can propose any value of its choice. 
$L$ equivocates and sends one proposal to replica 2 
(shown by dashed red arrow) and a different proposal to honest replicas. 
\textbf{2.} $(\Commit)$ Honest replicas forward $L$'s proposal and send $\Commit$ requests to all replicas.
Replica $2$ only sends to replicas $\{1,2,3\}$.
Replicas 4 and 5 receive $f+1$ commit requests for the blue value and do not detect equivocation, so they commit. 
Replica 1 detects leader equivocation and does not commit despite also receiving $f+1$ $\Commit$ requests for the blue value. 
\textbf{3}. $(\Notify)$  
Replicas 4 and 5 notify all other replicas and terminate. 
On receiving a valid notification, replica 1 accepts the blue value. 
\textbf{4}. $(\Status)$ 
The replicas send $\Status$ messages to the new leader $L'$ for iteration $k+1$. 
$\Status$ messages from terminated replicas 4 and 5 are virtual (shown by dotted green arrows).}
\label{fig:protocol}
\end{figure}

\bigskip

\section{An Improved Protocol with Stable Leaders}	\label{sec:smr:stable_leader}
The improved protocol distinguishes the common case and the view change procedure.
The common case runs the $\Propose$ round, the $\Commit$ round, a ``lightweight'' $\Notify$ round in iterations.
The $\Status$ round and the full $\Notify$ round will be pushed to the view change procedure.
Concurrent to the common case and view change, each replica additionally runs a checkpoint procedure and a leader monitoring procedure. 

\subsection{The Common Case}
Suppose an honest replica $i$ is currently in view $l$ and is working on slot $s$, and the current iteration is $k$.

\begin{itemize}
\addtolength{\itemindent}{32pt}
\item[\textbf{Round 1}] ($\Propose$)
Same as before, except that for slots that have not been worked on in prior views, the leader does not need to include safe value proofs (see view change).

\item[\textbf{Round 2}] ($\Commit$)
Same as before. 

\item[\textbf{Round 3}] ($\NotifyLight$)
If replica $i$ has committed on $v$ at the end of Round 2,
it sends a notification summary $\sig{s, k, \Notify, v}_i$ to every replica including itself. 

At the end of this round, 
if replica $i$ receives $f+1$ valid $\sig{s, k, \Notify, v}_j$ summaries for the same $v$,
it concatenates them to form a $\Notify$ certificate $\NotSum$; 
otherwise, it marks the current leader $L_l$ as faulty.
\end{itemize}

The $\NotifyLight$ round does not make other replicas accept a committed value.
Its purpose is to grant a replica the ability to prove the correctness of the committed value to other replicas or to an external client, using $\NotSum$.

\paragraph{Reply to clients.}
As in Section~\ref{sec:smr:basic}, a client can be convinced about a committed decision if it receives $f+1$ $\Notify$ summaries at the end of Round 3, or wait for one extra round to avoid talking to $f+1$ replicas.

\paragraph{Checkpoint.}
Before presenting the view change procedure, we need to introduce checkpoints.
We say slot $s$ is a stable checkpoint for replica $i$, if replica $i$ knows for sure that every honest replica has committed for slot $s$.
In fact, an easy way to create a stable checkpoint is to broadcast a valid $\Notify$ certificate $\NotSum$ (obtained in Round 3) to every replica. 
If another replica receives a valid $\NotSum$ for slot $s$, it can safety commit and moves to the next slot.
Thus, if replica $i$ broadcasts $\NotSum$ for slot $s$, then it knows for sure every other honest replica will commit for slot $s$ at the end of that round, and slot $s$ now becomes a stable checkpoint for replica $i$.
However, since honest replicas cannot coordinate their actions, this will require an all-to-all round of communication with $\Theta(n)$-sized messages per slot.
Following PBFT, we can create a checkpoint for every batch of slots to amortize the cost of checkpointing.
This involves every replica broadcasting a $\NotifyLight$ message and then a $\Notify$ certificate $\NotSum$ for a digest of a slot batch, e.g., of size $c=100$.
If a scheduled checkpoint does not become stable in time, a replica $i$ accuses the leader.

\subsection{View Change}
The view change procedure is initiated by the new leader $L_{l+1}$ once it obtains a $\Blame$ certificate $\BlameSum$, i.e., $f+1$ $\sig{\Blame, l+1}_i$ messages from distinct replicas.
We henceforth write $L_{l+1}$ as $L'$ for short.

\begin{itemize}
\addtolength{\itemindent}{50pt}
\item[\textbf{Round VC1}]
$L'$ sends a $\sig{\NewView, l+1, \BlameSum, s', \CPSum}_{L'}$ to every replica including itself.
$s'$ is the last stable checkpoint known to $L'$, and $\CPSum$ is a proof for that checkpoint. 

At the end of this round, if replica $i$ receives a valid $\NewView$ message,
it exits view $l$ and sets $\Enter_i := \mathsf{true}$.

\item[\textbf{Round VC2}]
If $\Enter_i = \mathsf{true}$, replica $i$ forwards the $\NewView$ message it receives in Round VC1 to every other replica.

At the end of this round, if replica $i$ is forwarded a valid $\NewView$ message but did not directly receive one from $L'$ in the previous round, or if replica $i$ detects equivocation about $s'$ by $L'$,
then replica $i$ exits view $l$, sets $\Enter_i := \mathsf{false}$, and marks $L'$ as faulty.

\item[\textbf{Round VC3}] ($\Notify$)
For every committed slot since the last stable checkpoint, replica $i$ sends the full $\Notify$ message, which includes the corresponding $\Cert$, to every replica including itself.  

At the end of this round, for every slot uncommitted slot, if replica $i$ receives a valid $\Notify$ message for that slot, it accepts the value in the $\Notify$ message.  

\item[\textbf{Round VC4}] ($\Status$)
Let $T$ be the largest slot that replica $i$ has committed or has accepted a value.
For every slot since the last stable checkpoint up to $T$, replica $i$ sends a $\Status$ message, which includes the corresponding $\Cert$, to the new leader $L'$.
Replica $i$ also sends $L'$ a $\sig{T, l+1, \StatusMax}_i$ message, indicating that it has not committed or accepted any value for any slot numbered greater than $T$.
$L'$ will use these $\StatusMax$ messages to prove that certain slots have not been worked on in prior views. 

At the end of this round, if $\Enter_i=\mathsf{true}$, then replica $i$ increments its view number $l$ and enters the new view. 
From the next round, replica $i$ returns to the common case and works on slot $s'+1$. 
If $\Enter_i=\mathsf{false}$, replica $i$ increments $l$ but does not enter the new view. 
\end{itemize}

It is crucial to note that we distinguish ``having a view number $l$'' and ``being in view $l$''.
A replica only enters a new view (and stay in that view) upon receiving a $\NewView$ message directly from the new leader.
A forwarded $\NewView$ message will only make a replica exit its current view, but not enter the new view.
This ensures the following.

\begin{lemma}
Honest replicas will not be in different views.
\label{lemma:same_view}
\end{lemma}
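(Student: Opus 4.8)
The plan is to argue by contradiction: suppose at some point two honest replicas are in different views, say replica $a$ is in view $l+1$ while replica $b$ is still in view $l$ (a gap of more than one view reduces to this case by taking the smallest witnessing pair). The key observation is the distinction emphasized right before the lemma: an honest replica enters view $l+1$ only upon receiving a $\NewView$ message \emph{directly} from $L' = L_{l+1}$ in Round VC1. So if replica $a$ is in view $l+1$, then at the end of Round VC1 replica $a$ received a valid $\sig{\NewView, l+1, \BlameSum, s', \CPSum}_{L'}$ message, set $\Enter_a := \mathsf{true}$, and in Round VC2 forwarded this $\NewView$ message to every other replica, including $b$.

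Now I trace what happens to replica $b$. Since all honest replicas are assumed to be executing in locked step (guaranteed by the clock synchronization protocol), replica $b$ receives $a$'s forwarded $\NewView$ message by the end of Round VC2. By the Round VC2 rule, upon being forwarded a valid $\NewView$ message, replica $b$ exits view $l$ (setting $\Enter_b := \mathsf{false}$ and marking $L'$ faulty if it did not itself receive the $\NewView$ directly). Either way, by the end of Round VC4 replica $b$ has incremented its view number to $l+1$ and is no longer in view $l$ --- it is either in view $l+1$ (if it too received $\NewView$ directly) or ``out of view.'' In neither case is $b$ in view $l$, contradicting the assumption. The symmetric argument handles the case where $b$ is ``ahead''; and the case of a gap larger than one view is ruled out because entering view $l+1$ requires a $\NewView$ for view $l+1$, whose existence (a valid $\BlameSum$ of $f+1$ $\sig{\Blame, l+1}$ messages, at least one from an honest replica) in turn requires that honest replica to have exited view $l$, so views are entered in increasing order and no honest replica can skip ahead of another by two.

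I would also need to check the base case and the ``out-of-view'' bookkeeping: initially all honest replicas are in view $1$ (or whatever the first view is), and an honest replica that is ``out of view'' after a view change still holds a view number but is not counted as being in any view, so it trivially cannot be in a different view from anyone. The main obstacle is making precise the claim that \emph{no} honest replica can still be in view $l$ once one honest replica enters $l+1$ --- this hinges on every honest replica receiving the forwarded $\NewView$ in Round VC2, which is exactly where the locked-step synchrony assumption and the all-to-all forwarding in Round VC2 are essential; I would state this dependence explicitly. A secondary subtlety is ensuring that a replica cannot be simultaneously ``in view $l$'' by one reckoning and have already moved on by another --- but since view entry and exit are driven entirely by $\NewView$ receipt/forwarding within a single bounded view-change window, the timeline is unambiguous.
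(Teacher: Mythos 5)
Your proposal is correct and takes essentially the same route as the paper: being in view $l+1$ requires having received the $\NewView$ message directly and hence having forwarded it in Round VC2, which forces every other honest replica to exit view $l$. The paper states this directly rather than by contradiction and omits the (harmless) elaborations about larger gaps and the base case, but the key mechanism is identical.
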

\begin{proof}
If an honest replica $h_1$ is in view $l$, according to the protocol, it must have received a valid $\NewView$ message from $L_l$.
Then, it must have forwarded the $\NewView$ message to all other replicas.
Thus, all other honest replicas must have exited view $l-1$ in Round VC2 of that view change.
\end{proof}

Lemma~\ref{lemma:same_view} states that honest replicas will never be in different views.
But some honest replicas may not be in any view. 
If a replica is not in any view, it does not participate in Rounds 1 to 3 of the common case.
In particular, it ignores any proposal and does not send any $\Commit$ request or $\Notify$ summary. 
But it will participate in Round 4 of the common case: it will commit a value upon receiving a valid $\Notify$ certificate.

\paragraph{Leader monitoring.}
In the view change procedure, if replica $i$ is forwarded a $\NewView$ message but does not directly receive one from $L'$, it knows $L'$ is faulty. 
Another situation where replica $i$ detects $L'$ as faulty is when replica $i$ has sent $\BlameSum$ to $L'$ but $L'$ does not initiate the view change procedure. 
In both cases, replica $i$ will skip $L'$ and wait for the next view change.
In more details, replica $i$ keeps monitoring its current leader $L_l$ and future leaders as follows.

\begin{itemize}
\item[--] 
At the beginning of any round, if replica $i$ has marked $L_l$ faulty, it sends a $\sig{\Blame, l+1}_i$ message to every replica including itself. 

\item[--] 
At the end of any round, if replica $i$ has gathered $f+1$ $\sig{\Blame, l+1}_j$ messages, it concatenates them to form a $\Blame$ certificate $\BlameSum$ (else $\BlameSum := \bot$).

\item[--]
At the beginning of any round, if a replica $i$ has $\BlameSum \neq \bot$, it sends $\BlameSum$ to the next leader $L'$.

\item[--]
At the end of any round, if replica $i$ has sent $\BlameSum$ to $L'$ in the previous round but does not receive a $\NewView$ message from $L'$ in this round, 
then replica $i$ marks $L'$ as faulty and increments the view number $l$.
\end{itemize}

We remark that we have only pointed out the necessary conditions for an honest replica to accuse a faulty leader.
Honest replicas may detect a faulty leader through other means (e.g., equivocation and invalid messages).
But whether or not they accuse the leader in those situations will not affect the safety and liveness of the protocol.

\subsection{Safety and Liveness}
The proof for safety remains largely unchanged from Section~\ref{sec:proof} except for a small modification to the base case of Lemma~\ref{lemma:cert}.

\newcounter{tmp}
\setcounter{tmp}{\value{theorem}}
\setcounterref{theorem}{lemma:cert}
\addtocounter{theorem}{-1}
\begin{lemma}[restated with slots]
Suppose replica $h$ is the first honest replica to commit for slot $s$.
If replica $h$ commits on $v^*$ in iteration $k^*$ and $\Cert$ certifies $(s,v,k)$ where $k \geq k^*$, then $v=v^*$.
\end{lemma}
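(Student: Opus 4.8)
The plan is to adapt the proof of Lemma~\ref{lemma:cert} from Section~\ref{sec:proof} essentially verbatim, carrying the extra slot index $s$ through each step and checking that the one place where the stable-leader view-change protocol differs from the core synod protocol --- the base case --- still goes through. The inductive structure on $k$ is unchanged: the inductive hypothesis says that between iteration $k^*$ and iteration $k$ every $\Commit$ certificate for slot $s$ certifies $v^*$, hence at the start of iteration $k+1$ every honest replica that is working on slot $s$ has either committed on $v^*$ or accepts $v^*$ for slot $s$; the case analysis on the honest contributor $h_2$ to a putative certificate for $(s,v,k+1)$ --- terminated/virtual, virtual proposal, or normal proposal with a safe value proof $P$ --- is identical to before, now with all summaries and certificates tagged with slot $s$.

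The one genuinely new point is the base case, and this is where I expect the main (though modest) obstacle to lie. In the core protocol the honest committer $h$ forwards the proposal it received and would therefore detect leader equivocation; in the stable-leader protocol the $\Propose$ and $\Commit$ rounds are unchanged (the appendix explicitly says ``Same as before''), so the same argument applies: any valid $\Cert$ for $(s,v,k^*)$ consists of $f+1$ $\Commit$ requests or $\Notify$ summaries for $v$ on slot $s$, at least one from an honest replica $h_1$; since no honest replica has committed for slot $s$ before $h$ did, $h_1$'s contribution is a normal $\Commit$ request, so $h_1$ received and forwarded a (possibly virtual) proposal for $v$ on slot $s$, and if $v \neq v^*$ then $h$ would have seen the forwarded conflicting proposal and refused to commit. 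I would also note that, because slots in different views may have been worked on by different leaders, I must additionally invoke the view-change guarantees --- specifically that after a stable checkpoint at slot $s' < s$ the $\Notify$/$\Status$ rounds (Rounds VC3, VC4) re-establish the same ``accepts $v^*$ for slot $s$'' invariant for every honest replica that enters the new view, and that out-of-view replicas (Lemma~\ref{lemma:same_view} and the surrounding discussion) only ever commit for slot $s$ via a valid $\Notify$ certificate, which by the inductive hypothesis can only be for $v^*$. This keeps the inductive hypothesis intact across a view change.

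Concretely, the steps in order are: (1) restate the inductive claim on $k$ with the slot index; (2) base case $k = k^*$: expand $\Cert$ for $(s,v,k^*)$, extract an honest $\Commit$ request for slot $s$, use the unchanged $\Commit$-round equivocation check to conclude $v = v^*$; (3) observe, as before, that consequently all honest replicas accept $v^*$ for slot $s$ at the end of iteration $k^*$ (via $h$'s $\Notify$ message, and for replicas not in $h$'s view, via the $\Notify$ certificate forwarded during the view change / checkpoint), and that this invariant is preserved by Rounds VC3--VC4; (4) inductive step: expand $\Cert$ for $(s,v,k+1)$, extract honest $h_2$, and run the three-way case split (terminated $\Rightarrow$ virtual request for $v^*$; virtual proposal $\Rightarrow$ $v$ must equal $v_{h_2}=v^*$; normal proposal $\Rightarrow$ $P$ contains an honest summary for $(s,v^*,k^*)$ or higher, and by the inductive hypothesis $P$ contains no $\Commit$ certificate for a conflicting value at iteration $\geq k^*$, so the only safe-to-propose value on slot $s$ is $v^*$); (5) conclude $v = v^*$ in all cases. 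The safety theorem (Theorem~\ref{thm:safety}) then follows from this lemma exactly as in Section~\ref{sec:proof}, with no further changes.
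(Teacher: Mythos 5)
Your plan follows the paper's proof: keep the induction on $k$ verbatim with the slot index carried through, and isolate the base case as the only place where the stable-leader protocol forces a change. That is exactly what the paper does, and your inductive step and your handling of out-of-view replicas (they commit only via $\Notify$ certificates, which the inductive hypothesis constrains to $v^*$) are correct.

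However, your base case as concretely stated --- ``the $\Propose$ and $\Commit$ rounds are unchanged, so the same argument applies; use the unchanged $\Commit$-round equivocation check'' --- elides the one step that is genuinely new, and which is the entire content of the paper's restated proof. The equivocation check only fires if $h$ and the honest contributor $h_1$ received conflicting proposals \emph{from the same leader}; in the stable-leader protocol this is no longer automatic, since a priori $h_1$ might be operating under a different view (hence a different, non-equivocating leader) when it sends its $\Commit$ request for $v$. The paper closes this by observing that (i) a replica not in any view does not participate in the $\Commit$ round at all, so $h_1$ must be in \emph{some} view, and (ii) Lemma~\ref{lemma:same_view} then forces $h_1$ to be in the \emph{same} view as $h$, after which the old equivocation argument applies. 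You do cite both ingredients --- Lemma~\ref{lemma:same_view} and the fact that out-of-view replicas only commit via $\Notify$ certificates --- but you attach them to preserving the inductive hypothesis across view changes rather than to the base case, where they are actually needed. Moving that view-alignment argument into step (2) makes your proof coincide with the paper's.
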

\begin{proof}
For the base case, assume for contradiction that a $\Commit$ certificate for $(s,v,k^*)$ exists for $v \neq v^*$. 
An honest replica $h_1$ must have sent a $\Commit$ request for $v$. 
Replica $h_1$ must then be in some view; otherwise, it would not have participated in the $\Commit$ round at all.
Due to Lemma~\ref{lemma:same_view}, replica $h_1$ must be in the same view as replica $h$.
Thus, $h_1$ must have detected leader equivocation and would not have committed.
The proof for the inductive step remains unchanged.
\end{proof}

\setcounterref{theorem}{thm:safety}
\addtocounter{theorem}{-1}
\begin{theorem}[Safety, restated with slots]
If two honest replicas commit on $v$ and $v'$ respectively for slot $s$, then $v=v'$.
\end{theorem}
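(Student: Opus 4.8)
The plan is to obtain this theorem as an immediate corollary of the restated Lemma~\ref{lemma:cert}, mirroring the proof of Theorem~\ref{thm:safety} in Section~\ref{sec:proof}. First I would fix notation: let replica $h$ be the first honest replica to commit for slot $s$, and say it commits on $v^*$ in iteration $k^*$. The goal is to show that both $v$ and $v'$ equal $v^*$.

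The one reduction step that needs a line of care is this: whenever an honest replica commits on a value $u$ for slot $s$, there exists a valid $\Commit$ certificate for $(s,u,k)$ with $k \geq k^*$. In the improved protocol an honest replica commits for slot $s$ in one of two ways --- either it collects $f+1$ valid and matching $\Commit$ requests at the end of a $\Commit$ round (which directly constitute such a certificate), or, while out of view, it receives a valid $\Notify$ certificate $\NotSum$ (this also covers the checkpoint mechanism, since a stable checkpoint is propagated via such a certificate). In the latter case $\NotSum$ contains $f+1$ $\Notify$ summaries for $u$, at least one from an honest replica; that honest replica must itself have committed on $u$ at the end of some $\Commit$ round and hence held $f+1$ matching $\Commit$ requests, yielding the certificate. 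Since $h$ is the first honest replica to commit for slot $s$, any such commit --- and therefore the certificate --- occurs in an iteration at least $k^*$.

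Then I would simply apply the restated Lemma~\ref{lemma:cert}: with $h$ committing on $v^*$ in iteration $k^*$ and a $\Commit$ certificate for $(s,v,k)$ with $k \geq k^*$, we conclude $v = v^*$; the identical argument gives $v' = v^*$, so $v = v'$.

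I expect essentially no obstacle at this level: the substantive work, in particular the interaction with views, is already absorbed into Lemma~\ref{lemma:cert}, whose base case invokes Lemma~\ref{lemma:same_view} to rule out honest replicas committing on conflicting values in iteration $k^*$. The only thing I would be careful to state explicitly is the reduction above --- that a commit, however it is triggered (a direct quorum of $\Commit$ requests, a forwarded $\Notify$ certificate, or a batch checkpoint certificate), always entails an actual $\Commit$ certificate for $(s,v,k)$ with $k \ge k^*$ --- after which the theorem follows in two lines.
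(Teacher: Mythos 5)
Your proposal is correct and follows essentially the same route as the paper, which simply invokes the restated (with slots) Lemma~\ref{lemma:cert} exactly as in the proof of Theorem~\ref{thm:safety}. Your extra care in checking that every way an honest replica can commit for slot $s$ (a quorum of $\Commit$ requests, a forwarded $\Notify$ certificate, or a checkpoint) yields a valid $\Commit$ certificate for $(s,v,k)$ with $k\ge k^*$ is a reasonable elaboration of a step the paper leaves implicit, and is in fact immediate since, by the validity rules, $f+1$ $\Notify$ summaries for $v$ already constitute such a certificate.
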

The proof invokes Lemma~\ref{lemma:cert} in the same way (but with slots), and we do not repeat it.
\setcounter{theorem}{\thetmp}

\bigskip

\begin{theorem}[Liveness]
The protocol commits $s$ slots in at most $3s + O(cf)$ rounds, where $c$ is the checkpoint batch size.
\end{theorem}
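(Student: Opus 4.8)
The plan is to bound the number of rounds in two phases: (i) the ``settling'' phase in which honest replicas may be in different views until a common honest leader takes control, and (ii) the steady phase under a stable honest leader. First I would argue that, by a round-robin leader schedule over the $n=2f+1$ views, at most $f$ of the leaders are Byzantine, so after at most $f+1$ view changes an honest leader $L_l$ is reached. Each view change is four rounds (Rounds VC1--VC4), and by the leader-monitoring rules a faulty leader $L'$ that refuses to send $\NewView$ is caught within a constant number of rounds once an honest replica has forwarded a $\Blame$ certificate $\BlameSum$ to it; combined with the $f+1$ $\Blame$-message gathering round, each skipped leader costs $O(1)$ rounds. Hence the settling phase costs $O(f)$ rounds, plus the checkpoint overhead, which is where the $O(cf)$ term enters: before each new view the replicas run the $\Notify$ round (VC3) and $\Status$ round (VC4) for every slot since the last stable checkpoint, i.e. up to $c$ slots, so each of the $O(f)$ view changes carries $O(c)$ extra message-passing for the uncommitted batch, giving $O(cf)$.

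Next I would invoke the Liveness theorem of the core protocol (Section~\ref{sec:proof}) together with the stable-leader analysis: once an honest leader $L_l$ is in view $l$ and all honest replicas have entered view $l$ (which Lemma~\ref{lemma:same_view} and the $\NewView$-forwarding argument guarantee happens within the VC rounds, since an honest $L_l$ does send $\NewView$ to everyone), every subsequent iteration is exactly three rounds --- $\Propose$, $\Commit$, $\NotifyLight$ --- and commits one slot. Any honest replica that was left ``out of view'' will, upon seeing a valid $\Notify$ certificate $\NotSum$ in Round~3, commit the slot and catch up (this is the Round-4 virtual participation described after Lemma~\ref{lemma:same_view}), so no honest replica lags by more than $O(1)$ rounds. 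Therefore once the honest leader is in control, committing the remaining slots up to slot $s$ takes at most $3s$ rounds (charging $3$ rounds per slot, plus $O(1)$ per periodic checkpoint of batch size $c$, which is absorbed into the $O(cf)$ slack).

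Putting the two phases together: $3s$ rounds of steady progress plus $O(cf)$ rounds of overhead (the $O(f)$ skipped leaders, each at $O(1)$ view-change rounds and $O(c)$ batch-replay rounds, and the periodic stable checkpoints) gives the claimed bound $3s + O(cf)$. I would also note for completeness that no honest leader after the first honest one is ever replaced, since an honest leader keeps delivering $f+1$ $\Notify$ summaries every iteration and hence never accumulates a $\Blame$ certificate against it, so the $3s$ term is not further inflated.

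The main obstacle I expect is the bookkeeping around the ``out-of-view'' honest replicas and the interaction between the checkpoint procedure and view changes: I need to argue carefully that a replica that exited view $l$ but has not entered $l+1$ cannot stall the committed slots (it still commits via $\NotSum$ in Round~4), that it does eventually enter a future view (because it will accuse a non-responsive $L'$ after sending $\BlameSum$, triggering the next round-robin leader), and that the batch-replay in Rounds VC3--VC4 does not need to go back further than the last stable checkpoint --- which is exactly why the overhead is $O(cf)$ rather than $O(sf)$. Making the charging argument (assign $3$ rounds to each of the $s$ slots, and a separate $O(cf)$ budget to view changes and checkpoints) precise is the crux; the rest follows from the already-established liveness of the core synod protocol and Lemma~\ref{lemma:same_view}.
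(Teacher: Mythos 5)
Your high-level skeleton matches the paper's: the bound rests on (1) an honest leader is never deposed (because it always steps up when handed a $\Blame$ certificate and always delivers $f+1$ $\Notify$ summaries per iteration, so no $\Blame$ certificate against it ever forms), and (2) each of the at most $f$ Byzantine leaders wastes a bounded number of rounds. Part (1) of your argument is essentially the paper's. The gap is in part (2), specifically in where the $O(c)$ per Byzantine leader comes from. You attribute it to the batch replay in Rounds VC3--VC4, but those are each a \emph{single round} carrying messages for up to $c$ slots --- that is $O(c)$ message size, not $O(c)$ rounds --- so a view change costs $O(1)$ rounds regardless of $c$. You also assert that a faulty leader is ``caught within a constant number of rounds'' and that ``no honest replica lags by more than $O(1)$ rounds,'' and neither holds against the adversarial strategy the paper is designed around: a Byzantine leader that steps up and then makes progress for a strict subset of the honest replicas (with the $f$ Byzantine replicas supplying the missing $\Notify$ summaries to that subset). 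The stalled honest replicas blame, but there may be fewer than $f+1$ of them, so no $\Blame$ certificate forms and the leader is not replaced; meanwhile the favored replicas race ahead.

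The paper closes exactly this hole with the checkpoint mechanism, and the load-bearing sub-claim you are missing is: \emph{the last stable checkpoints of any two honest replicas differ by at most one batch ($c$ slots)} --- because a checkpoint becomes stable only after some honest replica broadcasts a $\Notify$ certificate $\NotSum$, which forces every honest replica to commit up to that point; if no honest replica could do so, all honest replicas accuse the leader at the scheduled checkpoint. This bounds both the detection latency of a partially-misbehaving leader and the rollback a new Byzantine leader can impose by starting from a stale checkpoint, each by $O(c)$ iterations, which is the true source of the $O(cf)$ term. Without this lemma your charging argument does not go through: the ``settling phase'' is not a prefix of the execution (Byzantine leaders can appear between honest ones in the round-robin order), and the per-leader waste is governed by the checkpoint interval, not by the view-change round count.
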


\begin{proof}
The proof observes two crucial properties of the protocol: 
(1) an honest leader will not lose its leader role, and 
(2) each Byzantine leader can prevent progress for at most $O(c)$ rounds. 
The theorem then follows from the above properties.
Once an honest leader takes control, it will not be replaced and a new slot is committed in its view every iteration (3 rounds) in its view in the common case. 

For part (1), it suffices to show that an honest replica will not accuse an honest leader. 
In the protocol, there are two situations in which a replica accuses a leader: 
not stepping up for its view and not making progress in its view. 
Neither will happen to an honest leader.
Before accusing a future leader of not stepping up, an honest replica will send a $\Blame$ certificate to the leader.
At this point, an honest leader will broadcast $\NewView$ and make all honest replicas enter its view.
Once all honest replicas enter its view, they will all make progress, receive $f+1$ $\Notify$ summaries, and move to the next slot after every iteration.

For part (2), we first show that the last stable checkpoints of any two honest replicas can be at most off by 1 (i.e., $c$ slots apart).
Let $s_i$ be the last stable checkpoint known to replica $i$.
Suppose for contradiction that two honest replicas $i$ and $j$ have $s_i > s_j + 2c$.
This means at checkpoint $s_j + c$, no honest replica has broadcast $\NotSum$.
Then, all honest replicas should have accused the leader and replica $i$ should not have advanced to slot $s_i > s_j + 2c$, a contradiction.
Therefore, a Byzantine leader can disrupt progress by at most $c$ iterations when its view begins (by starting from a checkpoint $c$ slots behind),
and waste another $c$ iterations (by leading the common case normally until the next checkpoint) before getting accused and replaced.
\end{proof}

\paragraph{Remark:} It should now be clear that the checkpoint batch size $c$ is a trade-off between common case efficiency and worst-case efficiency.
A larger $c$ allows Byzantine leaders/replicas to prevent progress for longer with some carefully planned malicious actions; 
but in the best case where there is no such malicious behavior, a larger $c$ means less frequent checkpoints and hence less work in the common case.


\bigskip
\section{An Improved XFT Variant}	\label{sec:xft2}

XFT relies on an all-honest active group of $f+1$ to make progress~\cite{XFT}.
As we mentioned, when the number of faults $f$ approaches the $(n-1)/2$ limit, i.e., $n = 2f+1$, there is only one all-honest group among the $n \choose f+1$ total groups.
Finding this single group will be a challenge.

The XFT paper~\cite{XFT} mentions the simple scheme of trying out all groups, 
and also acknowledges that this simple scheme does not scale with $n$ and $f$ as it would require a super-exponential number of trials to find the only all-honest group.
Below we describe an XFT variant that only requires a quadratic number of trials to find the all-honest group.
The following description assumes the reader is familiar with the XFT protocol.

We will rotate the leader of the active group in a round robin fashion, and
let the leader pick its own $f$ followers.
Furthermore, we give each leader $f+1$ chances of view changes, i.e., let it re-pick its followers $f$ times, before replacing the leader.
We refer to all views under a leader as that leader's \emph{reign}.

If a leader is honest, in each view during its reign, it either makes progress (if all followers respond to its proposal correctly) or it detects at least one faulty follower that does not respond correctly.
In the latter case, the leader can locally mark that follower as faulty and replaces it with a new replica that has not been marked faulty. 
During an honest leader's reign, there can be at most $f$ view without progress. 
After that, the leader should have locally detected all $f$ faulty replicas and found the all-honest active group.
Thus, if there have been $f+1$ view without progress during a leader's reign, the leader must be faulty and should be replaced. 
Each faulty leader can cause at most $f+1$ view changes, and the $f$ Byzantine replicas combined can cause at most $(f+1)f = O(n^2)$ view changes.


\end{document}